\title{Formal exponentials and linearisations of QP-manifolds}
\author{Alex S.~Arvanitakis\thanks{Email: \texttt{alex.s.arvanitakis@vub.be}} }
\affil{\small
\textit{Theoretische Natuurkunde, Vrije Universiteit Brussel,
and the International Solvay Institutes,
Pleinlaan 2, B-1050 Brussels, Belgium}
}
\date{\today}
\begin{document}
\begin{titlepage}
\maketitle
\thispagestyle{empty}
\begin{narrower}
\begin{abstract}

We define \emph{formal exponential maps} for any graded manifold as maps from the \emph{formal tangent bundle} (that we also define) into the graded manifold. We show that each such map uniquely determines and is determined by its associated \emph{Grothendieck connection}, which is shown to be flat, and to furnish a resolution of the ring of functions. 
We then show how a recent construction involving the data of a connection on the tangent bundle recovers a large class of formal exponentials in our definition. 

As an application, we use a formal exponential map to linearise a QP-manifold at a point. This gives the formal tangent space at each point the structure of an $L_\infty$-algebra with invariant inner product.

\end{abstract}
\newpage
\thispagestyle{empty}

{\hypersetup{linkcolor=black}
\tableofcontents
}\end{narrower}
\end{titlepage}

\section{Introduction}
In Riemannian geometry, the exponential map constructed from a metric on some manifold $M$ is a diffeomorphism from a neighbourhood of zero in $T_x M$ --- the tangent space at some fixed point $x \in M$ --- into a neighbourhood of $x$ in $M$. It maps the tangent vector $X \in T_x M$ to the point at affine parameter 1 along the geodesic whose tangent is $X$ at $x$. The collection of these maps ranging over $x\in M$ can be assembled into a partially defined, smooth map $\exp:TM\to M$, whose domain is an open neighbourhood $U$ of the zero section of $TM$. The merit of this construction, as far as we are concerned, is to locally identify $M$ with the tangent space $T_xM$ in a way that is smooth when the point $x$ is varied. In fact, any partially defined map $TM\to M$ satisfying a nondegeneracy condition works for this purpose, and there exist such which are not induced via geodesics \cite{Cattaneo:2001vu}.

The purpose of the current paper is to define a notion of exponential map in the context of $\mathbb Z$-graded manifolds (and, \emph{mutatis mutandis}, supermanifolds), where the Riemannian exponential map cannot be defined this way. We were motivated by a recent construction of Liao and Sti\'enon \cite{Liao:2015lta}: given a graded manifold $\cM$ equipped with a connection $\nabla$ on its tangent bundle $T\cM$, they produce what they call a \emph{formal exponential map} as a map $\mathrm{pbw}:\Gamma[ST\cM]\to D(\mathcal M)$ from sections of (graded) symmetric powers of $T\cM$ into the module of differential operators $D(\cM)$ of $\cM$. They also establish some of its properties, including the existence of its induced \emph{Grothendieck connection} on $\bar S T^\star \cM$-valued forms ($\bar S$ being the completed graded symmetric algebra functor, see formula \eqref{def:tfo}). One of our goals is to see to what extent their construction exhausts the possible formal exponential maps that can exist.

To this end we propose a general definition of a formal exponential map in section \ref{sec:formalexpmaps} as a map $$\fexp:\tfo \cM\to \cM\,,$$ where $\tfo\cM$ is the \emph{formal tangent bundle} that we also introduce there. Even when $\cM$ is an ordinary manifold, $\tfo\cM$ is a graded manifold; a kind of shifted tangent bundle, as the notation suggests. (We were inspired by a notion by the same name that sometimes appears in algebraic geometry; see \cite{pantev2013shifted} for a use in a graded symplectic context.) In Section \ref{sec:grothendieck} we show that there is a unique Grothendieck connection associated to a given formal exponential map. In the graded manifold language, this is a vector field $D$ on $T[1]\cM\oplus \tfo\cM$ that is cohomological ($D^2=0$) and has form degree 1. (This graded interpretation of the Grothendieck connection has been suggested already in \cite{Liao:2015lta} and \cite{Moshayedi:2020bkb}; realising it was another motivation for this paper.) We then show (in section \ref{sec:cohomology}) that the cohomology of $D$ is precisely the ring of functions $\cin(\cM)$ of the original graded manifold $\cM$, which is analogous to how the BV-BRST/antifield formalism of gauge theory resolves the space of gauge-invariant observables up to equations of motion \cite{Fisch:1989rp,Fisch:1989dq}. We also outline how $\fexp$ can be recovered from the data of a Grothendieck connection $D$ that provides a resolution of $\cin(\cM)$ in this way. This can be viewed as a purely algebraic characterisation of formal exponential maps in terms of homological algebra.

In section \ref{sec:examples} we exhibit examples of formal exponential maps, including a construction of a formal exponential map from the data of a connection $\nabla$ on $T\cM$. This recovers some of the results of Liao and Sti\'enon \cite{Liao:2015lta}, including their construction of a resolution of $\cin(\cM)$, with an alternative --- subjectively speaking, perhaps simpler, for some physicists --- proof. We also show that their construction recovers a large class of formal exponential maps in our definition, namely those we call \emph{proper}.

Finally in subection \ref{sec:QP} we present a straightforward application relevant for physics. If the graded manifold $\cM$ is in fact a QP-manifold (i.e.~differential graded symplectic manifold), we show that specifying a formal exponential map gives rise to a QP structure on the formal tangent space $T_x[\varepsilon]\cM$ at each point $x$ of the body of $\cM$, which varies smoothly when the point $x$ is varied. Constructing such `linearisations' was another significant motivation for this paper, and also illustrates nicely the utility of the formal tangent bundle concept, as the passage from the formal tangent bundle to the formal tangent space at $x$ is intuitive.  This QP-structure on each formal tangent space can equivalently be seen as an $L_\infty$-algebra with a cyclic inner product (of degree determined by the degree of the original QP structure), given in Kontsevich's definition in terms of formal pointed manifolds \cite{Kontsevich:1997vb}. We conclude with subection \eqref{sec:generalisedpoints} where we make some preliminary remarks on linearisation around \emph{generalised points} $x:\cS\to\cM$ (where $\cS$ is an arbitrary graded manifold instead of a point).

\subsection{Graded manifold definitions, conventions, notations}

\paragraph{Graded manifolds.} We use the same definitions as in the recent foundational monograph \cite{vysoky2021global} for graded manifolds, except where noted. A \emph{graded manifold} $\cM$ is thus a smooth real $d$-dimensional manifold $M$ --- the \emph{body} --- equipped with a sheaf of $\bbZ$-graded, $\bbZ_2$-graded-commutative algebras $\mathcal O$ so that each point of $M$ has an open neighbourhood $U$ with
\be
\label{eq:conventions:defofdmn}
\mathcal O(U)\cong \cin(U)\otimes \bar S(\bbR^m)\otimes \Lambda(\bbR^n)\,.
\ee ($\Lambda$ and $\bar S$ are respectively the exterior algebra and extended (completed) symmetric algebra, so $\bar S(\bbR^m)$ is the ring of formal power series $\bbR[[x_1,x_2,\dots x_m]]$ in commutative variables $x_1,\dots x_m$.) We assume the $\bbZ$ grading $\deg$ is correlated with the $\mathbb Z_2$ Grassmann parity, so that
\be
fg=(-1)^{\deg f \deg g} gf\equiv (-1)^{ f  g} gf
\ee
for $f,g$ lying in the \emph{ring of functions} $\cin(\cM)\equiv\mathcal O(M)$. (We just introduced a convenient notation for the Koszul sign determined by $\deg$.)

However we do \emph{not} assume the $\deg=0$ functions $\cin_0(\cM)$ are exhausted by the ring of functions $\cin(M)$ of the body $M$. (They do form a subring $\cin(M) \subseteq \cin_0(\cM)$.) This definition is therefore broader than usual definitions, which exclude degree zero ``graded'' coordinates. Similar definitions --- containing degree zero ``graded'' coordinates --- appear to have been considered in only a few works \cite{mehta2015atiyah,batakidis2018atiyah}. The point of this definition in the context of the current paper is specifically to accommodate the formal tangent bundle. Most of the usual properties true of (graded) manifolds remain true in this broader setting, although, as was pointed out in \cite{vysoky2021global}, some proofs may be more complicated; we also find a subtlety involving sections of vector bundles in this setting, that we explain at the end of subsection \ref{subsection:tforesdeg}.

We assume $M$ is connected and $\cM$ is of finite dimension (graded and otherwise).

\paragraph{Vector fields on graded manifolds.} \emph{Derivations} and \emph{vector fields} are the same and both mean left derivations $X$ on sections of the structure sheaf, so $X(fg)=X(f) g+(-1)^{Xf} f X(g)$ (where the sign is properly written $(-1)^{(\deg X)(\deg f)}$).

\paragraph{Shifted tangent bundle.}  We define the shifted tangent bundle $T[1]\cM$ in the usual way, so $\cin(T[1]\cM)=\Omega(\cM)$ (where $\Omega(\cM)$ is the ring of differential forms). The de Rham differential $\dr$ of differential forms is a degree 1 vector field $\dr$ on $T[1]\cM$, with $\dr^2=0$. $T[1]\cM$ admits an additional grading by \emph{form degree} $\operatorname{formdeg}$, where the fibre coordinates carry form degree 1.

\paragraph{Local coordinate expressions.} Without loss of generality we can assume every point of $M$ is contained in an open set $U$ that is diffeomorphic to some open subset $V$ of $\mathbb R^d$, so that $\mathcal O(U)\cong \cin(V)\otimes \bar S(\bbR^m)\otimes \Lambda(\bbR^n)$. We denote the generators of $\bar S(\bbR^m)$, $\Lambda(\bbR^n)$ and the coordinates of that $\mathbb R^d$ collectively by $\{z^a|a=1,2,\dots d+m+n\}$. These are graded commutative under multiplication:
\be
z^a z^b=(-1)^{\deg z^a \deg z^b} z^b z^a=(-1)^{ab} z^b z^a
\ee
where we again introduced a convenient notation for the sign. These coordinates are homogeneous in degree by construction.

In these coordinates we can write any vector field $X$ of $\cM$ defined on $U$ as
\be
X= X(z^a) \frac{\pd}{\pd z^a}= X^a\pd_a
\ee
using the Einstein summation convention, where $\pd_a\equiv\frac{\pd}{\pd z^a}$ is the left derivation with $\frac{\pd}{\pd z^a} z^b=\delta^b_a$.

For every such chart of $\cM$ we have a chart of $T[1]\cM$ with coordinates (where $d,m,n$ are defined in \eqref{eq:conventions:defofdmn})
\be
\label{eq:T1chart}
\{z^a|a=1,2,\dots d+m+n\}\cup\{\dr z^a|a=1,2,\dots d+m+n\}
\ee
where $\deg \dr z^a=\deg z^a +1$, $\operatorname{formdeg}z^a=0$, $\operatorname{formdeg}\dr z^a=1$, and the de Rham differential acts in the obvious way so the notation makes sense ($\dr(z^a)=\dr z^a\,, \dr (\dr z^a)=0$). Since $\dr$ is a left derivation, we have
\be
\dr f= \dr z^a\pd_a f
\ee
for $f\in \cin(\cM)$. Finally, the contraction pairing $\iota_X \alpha$ between vector fields $X$ and 1-forms $\alpha$ is such that in local coordinates
\be
\iota_X \dr f= X(f) = X^a \pd_a f\,,
\ee
so $\{\dr z^a\}$ is the dual basis to $\{\pd_a\}$.


\section{Formal exponential maps}
\label{sec:formalexpmaps}
We will define the formal exponential map as a map from the \emph{formal tangent bundle} $\tfo\cM$ to $\cM$.

\subsection{The formal tangent bundle and resolution degree}
\label{subsection:tforesdeg}
Informally $\tfo \cM$ is a ``0-shifted'' tangent bundle: one where the fibre coordinates generate a ring of formal power series, with transition functions inherited from the tangent bundle $T\cM$. ($\varepsilon$ in $\tfo$ is meant to evoke power series and/or perturbation expansions; it is not a shift by some integer in the sense of \cite[Proposition 5.25]{vysoky2021global}.) We can define it formally as
\begin{definition}
The \emph{formal tangent bundle} $\tfo\cM$ is the graded manifold with body $M$ defined by the sheaf $\mathcal O(U)=\Gamma|_U[\bar S (T^\star \cM)]$, where $U\subseteq M$ is open and $\Gamma|_U$ denotes the space of sections restricted to $U$. In particular the ring of functions of the formal tangent bundle is
\be
\label{def:tfo}
\cin(\tfo\cM)\equiv \Gamma[\bar S (T^\star \cM)]\,,\qquad\bar S (T^\star \cM)\equiv\prod_{n=0}^\infty S^n(T^\star \cM)\,.
\ee
Here $S^n$ the $n$th (graded) symmetric power and $\bar S$ is the extended (graded) symmetric algebra in the sense of \cite[Section 1.3]{vysoky2021global}. This is an algebra of (graded symmetric) formal power series on account of our use of the direct product $\prod$ instead of the direct sum.
\end{definition}

Given an arbitrary local chart of $\cM$ with coordinates $\{z^a\}$ as described above, we define the induced chart of $\tfo \cM$ to have coordinates (for $d,m,n$ counting the dimension of the body and the even and odd coordinates of $\cM$, see \eqref{eq:conventions:defofdmn})
\be
\label{adaptedtfocoords}
\{z^a|a=1,2,\dots d+m+n\}\cup\{\varepsilon^a|a=1,2,\dots d+m+n\}
\ee
where $\varepsilon^a$ are the dual basis to $\{\pd/\pd z^a\}$. We set $\deg \varepsilon^a=\deg z^a$. We declare the $\{\varepsilon^a\}$ to be the generators of the extended graded symmetric algebra in $d+m+n$ variables (see formula \eqref{def:tfo}). With the obvious definition of the transition functions (see the related formula \eqref{eq:transitionfunctions}), specifying this induced chart for each local chart of $\cM$ suffices to define $\tfo\cM$ as a graded manifold. 

A nice example is $\cM=\bbR$. Then the ring of functions consists of smooth functions on $\bbR$ tensored with formal power series in one (commuting) variable $\varepsilon$: $$\cin(\tfo\bbR)=\cin(\bbR)\otimes \bbR[[\varepsilon]]\,.$$
This illustrates that the formal tangent bundle generically (as long as the body $M$ is not a point) involves degree zero ``graded'' coordinates.

An arbitrary function $g\in \cin(\tfo \cM)$ can be expressed in these coordinates as the formal power series
\be
\label{eq:resdegexpansionG}
g=g(z,\varepsilon)= g(z,0)+\varepsilon^a\frac{\pd g}{\pd \varepsilon^a}|_{\varepsilon=0}+\frac{1}{2}\varepsilon^a \varepsilon^b \frac{\pd}{\pd\varepsilon^b}\frac{\pd g}{\pd\varepsilon^a}|_{\varepsilon=0}+\dots
\ee
We will call the polynomial degree in $\varepsilon^a$ \textbf{resolution degree}\footnote{Another good name would be ``formal degree''. Unfortunately this is too similar to ``form degree''.} and denote it by $\resdeg$. This is preserved by transition functions between local charts of this kind and therefore gives a grading on the ring of functions $\cin(\tfo\cM)$.

\paragraph{Vector bundle structure.} The formal tangent bundle $\tfo\cM$ is a graded vector bundle in the sense of \cite{vysoky2021global}. We can obtain its module of global sections as the dual module to the $\cin(\cM)$-module of functions of resolution degree 1 on $\tfo\cM$. Alternatively, the adapted coordinate chart \eqref{adaptedtfocoords} defines a local trivialisation $\tfo\cM|_U\cong \cM|_U\times \mathcal V$, for $\mathcal V$ the graded vector space whose ring of functions is the extended graded symmetric algebra (over $\bbR$) in $d+m+n$ variables $\varepsilon^a$ (see \eqref{adaptedtfocoords}). The underlying non-graded vector bundle is the trivial bundle $M\to M$ of rank zero, c.f.~$T[1]\cM$. There exists a bundle projection $p_{\tfo}:\tfo\cM \to \cM$, taking the obvious form $p_{\tfo}^\star z^a=z^a$ in the adapted chart \eqref{adaptedtfocoords}.

A subtlety with the vector bundle structure (which is in fact exhibited by any vector bundle with degree zero ``graded'' fibre coordinates) is that there is an issue with interpreting non-zero sections of $\tfo\cM$ as maps $\sfs:\cM\into\tfo\cM$ such that $p_{\tfo}\circ \sfs = 1_\cM$. This is obvious in local coordinates. A map $\sfs$ is determined by its values on the coordinates \eqref{adaptedtfocoords}, and we have $\sfs^\star z^a=z^a$ (due to $\pi_{\tfo}\circ \sfs = 1_\cM$) and $\sfs^\star\varepsilon^a\equiv \sfs^a$, for $\sfs^a$ local functions on $\cM$. However $\sfs^\star g$ will fail to be well-defined if the expansion of $g$ \eqref{eq:resdegexpansionG} in $\varepsilon$ does not terminate \emph{unless} e.g.~$\sfs^a$ vanishes. This subtlety does not affect most considerations in this paper because we will be using the zero sections of various bundles, except in the final subsection \ref{sec:generalisedpoints}.

\subsection{Vertical jets}
We return to the expansion of $g(z,\varepsilon)\in\cin(\tfo\cM)$ in terms of resolution degree. Using  the zero section $\sfs_{\tfo}:\cM\into \tfo\cM$ and the bundle projection $p_{\tfo}: \tfo\cM\to \cM$, we can write $g(z,0)$ invariantly as the pullback
$p_{\tfo}^\star \sfs_{\tfo}^\star g$. The higher-order  terms in resolution degree can also be understood invariantly as (graded) symmetric multilinear maps sending vector fields $X_1,X_2,\cdots X_n\in \Gamma[T\cM]$ to functions $f\in \cin(\cM)$.

Before we sketch that definition we explain how functions on the (ordinary) tangent bundle $T\cM$ give rise to functions on the formal tangent bundle $\tfo\cM$:
\begin{proposition}
\label{jetprop}
Given $g\in \cin(T\cM)$ a function on the tangent bundle, the following expression defines a collection $J^n(g):S^n\Gamma[T\cM]\to \cin(\cM)$ of (graded) symmetric $\cin(\cM)$-multilinear maps, that we will call the \emph{vertical jets} of $g$:
\be
\label{taylor:coordfree:def}
J^n(g)(X_1,X_2,\cdots X_n) \equiv \sfs_{T}^\star\Big(\widehat X_1\big(\widehat X_2 \cdots (\widehat X_n g)\cdots \big)\Big)\,,\qquad X_1,X_2,\cdots \in \Gamma[T\cM]\,,
\ee
for $\sfs_{T}:\cM\to T\cM$ the zero section of $T\cM$, and $\widehat X \in \Gamma[T(T\cM)]$ the \emph{vertical lift} of $X\in \Gamma[T\cM]$ (to be defined). Moreover, $J^n(g)$ only depends on the equivalence class
\be
\label{taylor:equivalencerel}
g_1\sim g_2\iff g_1-g_2 \in \mathcal I^{n+1}
\ee
for $\mathcal I^{n+1}$ the $(n+1)$-th power of the ideal $\mathcal I$ of functions in $\cin(T\cM)$ that are annihilated by $\sfs^\star_T$ (i.e.~those that vanish on the locus $\cM\into T\cM$ defined by the zero section).
\end{proposition}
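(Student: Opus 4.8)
The plan is to reduce all three assertions---graded $\cin(\cM)$-multilinearity, graded symmetry, and dependence only on $g \bmod \mathcal I^{n+1}$---to a handful of elementary properties of the vertical lift, computed in an adapted chart. First I would fix the vertical lift: in a chart $(z^a, u^a)$ for $T\cM$, with $u^a$ the fibre coordinates dual to $\pd_a$ and $\deg u^a = \deg z^a$, the vertical lift of $X = X^a \pd_a$ is
\be
\widehat X = (p_T^\star X^a)\,\frac{\pd}{\pd u^a}\,,
\ee
where $p_T : T\cM \to \cM$ is the bundle projection. I would then record the four properties I need: (i) $\widehat X$ is a left derivation of $\cin(T\cM)$; (ii) $\widehat X(p_T^\star f) = 0$ for every $f \in \cin(\cM)$, since $\widehat X$ only differentiates fibre coordinates; (iii) $\widehat{fX} = (p_T^\star f)\,\widehat X$; and (iv) vertical lifts graded-commute, $[\widehat X, \widehat Y] = 0$, because their coefficients are pulled back from the base (hence produce no first-order cross terms) while the $\pd/\pd u^a$ graded-commute among themselves. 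I would check that these hold chart-independently, so $\widehat X$, and therefore $J^n(g)$, are globally defined.

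Graded multilinearity and symmetry then follow formally. For linearity in the $i$-th slot I would use (iii) to write $\widehat{fX_i} = (p_T^\star f)\widehat X_i$, then (ii) to commute $p_T^\star f$ leftwards past $\widehat X_1, \dots, \widehat X_{i-1}$; each commutation is free apart from the Koszul sign, and finally $\sfs_T^\star(p_T^\star f \cdot h) = f\,\sfs_T^\star h$ because $\sfs_T^\star p_T^\star = \mathrm{id}$ (as $p_T \circ \sfs_T = 1_\cM$). This yields the graded $\cin(\cM)$-linearity with the expected sign $(-1)^{f(X_1 + \cdots + X_{i-1})}$. Graded symmetry is immediate from (iv): swapping adjacent arguments $X_i, X_{i+1}$ replaces $\widehat X_i \widehat X_{i+1}$ by $(-1)^{X_i X_{i+1}} \widehat X_{i+1} \widehat X_i$, and since adjacent transpositions generate the symmetric group, $J^n(g)$ descends to a map out of $S^n \Gamma[T\cM]$.

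The last claim carries the real content, and I expect it to be the main (if mild) obstacle. Since $J^n$ is manifestly linear in $g$, it suffices to show $J^n(g)=0$ whenever $g \in \mathcal I^{n+1}$. The key lemma is the filtration estimate
\be
\widehat X(\mathcal I^{k}) \subseteq \mathcal I^{k-1}\,, \qquad k \geq 1\,,
\ee
valid for any derivation: writing a generator of $\mathcal I^k$ as $h_1 \cdots h_k$ with $h_j \in \mathcal I$ and applying the graded Leibniz rule, every term keeps $k-1$ factors from $\mathcal I$ times one factor from $\cin(T\cM)$, and $\mathcal I^{k-1}$ is an ideal (note $\mathcal I = \ker \sfs_T^\star$ is an ideal because $\sfs_T^\star$ is an algebra map). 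Iterating, $\widehat X_1 \cdots \widehat X_n(\mathcal I^{n+1}) \subseteq \mathcal I$, which $\sfs_T^\star$ annihilates by definition; hence $J^n(g)=0$. The points demanding care are bookkeeping rather than conceptual: keeping the Koszul signs consistent throughout, and verifying chart-independence of the vertical lift so the construction is global. No Hadamard-type argument is needed, since the filtration estimate uses only that $\widehat X$ is a derivation and that $\mathcal I$ is an ideal.
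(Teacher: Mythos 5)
Your proof is correct and follows essentially the same route as the paper's: the vertical lift in adapted coordinates, linearity via $\widehat{fX}=p_T^\star(f)\widehat X$ together with $\sfs_T^\star p_T^\star=1_\cM^\star$, symmetry from $[\widehat X,\widehat Y]=0$, and the Leibniz rule for the $\mathcal I^{n+1}$ claim. Your explicit filtration lemma $\widehat X(\mathcal I^k)\subseteq \mathcal I^{k-1}$ merely unpacks the paper's one-line Leibniz-rule remark, and your coordinate definition of the lift (versus the paper's intrinsic one, $\widehat X(\alpha)=p_T^\star(\iota_X\alpha)$, $\widehat X p_T^\star=0$) is a cosmetic difference, since the paper itself immediately passes to the same coordinate expression.
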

\begin{proof}
The vertical lift is defined by \cite{yano1964linear}
\be
\label{verticalliftdef}
\widehat X(\alpha)=p_{T}^\star(\iota_X\alpha)\,,\qquad \widehat X\pi^\star_T =0\,,
\ee
where $\alpha$ on the left-hand side is the function $\alpha \in \cin(T\cM)$ corresponding to the 1-form $\alpha\in \Gamma[T^\star\cM]$. In adapted local coordinates $\{z^a,v^a\}$ for $T\cM$ (analogous to those of \eqref{adaptedtfocoords} for $\tfo\cM$), where $v^a$ are fibre coordinates, we can write the 1-form as $\alpha(z,v)= v^a \alpha_a(z)$ and the vector as $X=X^a(z)\pd/\pd z^a$. We then calculate
$\widehat X= X^a(z){\pd}/{\pd v^a}
$,
whence the graded commutator of two vertical lifts vanishes: $[\widehat X_1,\widehat X_2]=0$. Furthermore we have a kind of $\cin(\cM)$-linearity:
\be
\widehat{f X}= p_T^\star(f) \widehat X\,.
\ee
Since $\sfs_T^\star p_T^\star=1^\star_\cM$ this implies linearity:
\be
\begin{aligned}
 J^n(g)(f X_1,X_2,\cdots X_n) =\sfs^\star\Big(\widehat{f X_1}\big(\widehat X_2 \cdots (\widehat X_n g)\cdots \big)\Big)&=\sfs_T^\star\Big(p_T^\star f\widehat{X}_1\big(\widehat X_2 \cdots (\widehat X_n g)\cdots \big)\Big)\\
 &=f \sfs_T^\star\Big(\widehat{X}_1\big(\widehat X_2 \cdots (\widehat X_n g)\cdots \big)\Big)\,.
\end{aligned}
 \ee
 Finally, if $h\in \mathcal I^{n+1}$, then by the Leibniz rule $\widehat X_1\big(\widehat X_2 \cdots (\widehat X_n h)\cdots \big)$ is in $\mathcal I$, so $J^n(h)=0$.
\end{proof}

The same definition \eqref{taylor:coordfree:def} also works for functions $g$ defined on any open set $U$ containing the zero section $\sfs_T(M)$. In fact, if $g_1$ agrees with $g_2$ in some such $U$ up to terms in $(\mathcal I_U)^{n+1}$ (for $\mathcal I_U$ the ideal of functions defined on $U$ that are annihilated by $\sfs^\star|_{U}$), then $J^n(g_1)=J^n(g_2)$. Proposition \ref{jetprop} therefore gives a coordinate-free definition of ``jets along fibre directions'' of a function $g$ defined in a neighbourhood of the zero section of $T\cM$ that works in the graded case. In local coordinates the vertical jets $J^n(g)$ are recognised as the Taylor coefficients in an expansion around the zero section (given here assuming $\deg X_i=0\mod 2$):
\be
J^n(g)(X_1,X_2,\cdots X_n)=X_1^{a_1} X_2^{a_2} \cdots X_n^{a_n} \frac{\pd^n}{\pd v^{a_n}\pd v^{a_{n-1}}\cdots \pd v^{a_1}}g\Big|_{v=0}\,.
\ee

Since we can view 1-forms on $\cM$ as resolution degree 1 elements of $\cin(\tfo\cM)$, we can mimic the vertical lift \eqref{verticalliftdef} and jet \eqref{taylor:coordfree:def} constructions for $\tfo$ instead of $T$. To every $g\in\cin(\tfo \cM)$ this associates a collection $J_{\tfo}^n(g)\,, n=0,1,\dots$ of (graded) symmetric multilinear maps. These uniquely define $g$:
\be
\cin(\tfo\cM)\ni g=0\iff J_{\tfo}^n(g)=0\,,\qquad \forall n=0,1,\dots\,.
\ee
Given some $g$ defined (locally) on $T\cM$ we can therefore obtain a $g_\tfo$ defined (globally) on $\tfo\cM$ if we set $J^n_{\tfo}(g_{\tfo})\equiv J^n(g)$. (Henceforth we do not distinguish between $J^n$ and $J^n_{\tfo}$.)

Putting everything together:
\begin{proposition}
\label{preFexpfromgenexpmap:prop}
For every pair $(U,g)$ of an open set $U\supset \sfs_T(M)$ containing the zero section and function $g$ on $T\cM$ defined locally on $U$,  there exists a unique $g_{\tfo}$ defined (globally) on $\tfo \cM$ via its vertical jets \eqref{taylor:coordfree:def}. The function $g_{\tfo}$ depends only on the equivalence class
\be
\begin{split}
(U_1,g_1)\sim (U_2,g_2) \iff \exists V\subseteq U_1\cap U_2\,, V\supset \sfs_T(M) \,,  \\ (g_1-g_2)|_{V}\in (\mathcal I_V)^{n+1} \qquad\forall n=0,1,\dots
\end{split}
\ee
for $V$ an open set and $\mathcal I_V$ the ideal of functions on $T\cM$ defined locally on $V$ that are annihilated by the pullback $\sfs_T^\star$ involving the zero section $\sfs_T:\cM\to T\cM$.
\end{proposition}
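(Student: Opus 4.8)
The plan is to assemble the statement from the two facts already in place. First, by Proposition~\ref{jetprop} and the remark immediately following it, the vertical jets $J^n(g)$ are well-defined symmetric $\cin(\cM)$-multilinear maps $S^n\Gamma[T\cM]\to\cin(\cM)$ for any $g$ defined on an open set $U\supset\sfs_{T}(M)$, and they are insensitive to modifying $g$ by $(\mathcal I_V)^{n+1}$ on any smaller open $V\supset\sfs_{T}(M)$. Second, a function on $\tfo\cM$ is equivalent data to the collection of its own vertical jets, with $g_{\tfo}$ determined uniquely by the requirement $J^n_{\tfo}(g_{\tfo})=J^n(g)$. The proof is then a clean combination of these, organised as existence, uniqueness, and descent to equivalence classes.

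First I would make the correspondence precise. Using $\cin(\tfo\cM)=\Gamma[\bar S(T^\star\cM)]=\prod_{n\ge 0}\Gamma[S^n T^\star\cM]$ together with the identification of $\Gamma[S^n T^\star\cM]$ with symmetric $\cin(\cM)$-multilinear maps $S^n\Gamma[T\cM]\to\cin(\cM)$, the assignment $g_{\tfo}\mapsto(J^n_{\tfo}(g_{\tfo}))_n$ becomes a bijection between $\cin(\tfo\cM)$ and such collections of maps. The use of the direct product rather than the direct sum in $\bar S$ is exactly what permits arbitrary non-terminating collections of jets, i.e.\ genuine formal power series in $\varepsilon$.

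For existence, given $(U,g)$ I would define $g_{\tfo}$ to be the element of $\cin(\tfo\cM)$ whose vertical jets are $J^n(g)$; this is legitimate precisely because each $J^n(g)$ is defined globally on $\cM$ even though $g$ is only local (the vertical lifts $\widehat X_i$ act on $g$ only where $g$ is defined, while $\sfs_{T}$ lands inside $U$), and is $\cin(\cM)$-multilinear by Proposition~\ref{jetprop}. Uniqueness is then immediate from the injective half of the correspondence, namely $g_{\tfo}=0\iff J^n_{\tfo}(g_{\tfo})=0$ for all $n$. For descent to equivalence classes I would take $(U_1,g_1)\sim(U_2,g_2)$, pass to the common $V$ supplied by the relation, and apply the ideal-independence remark after Proposition~\ref{jetprop} at each order to conclude $J^n(g_1)=J^n(g_2)$ for all $n$; uniqueness then forces $(g_1)_{\tfo}=(g_2)_{\tfo}$, so the construction is constant on equivalence classes.

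I expect the main obstacle to be the identification used in the second paragraph: verifying that globally-defined symmetric $\cin(\cM)$-multilinear jets correspond bijectively to honest global functions on $\tfo\cM$. This is where finite rank of $T\cM$ enters, so that $\Gamma[T\cM]$ is finitely generated projective and sections of the symmetric power of the dual agree with the dual of the symmetric power; it is also where one must be careful about the degree-zero graded fibre coordinates whose subtleties were flagged at the end of subsection~\ref{subsection:tforesdeg}. In particular one should confirm that assembling the $J^n(g)$ into a single element of the completed symmetric algebra does not meet the non-termination issue of that remark, which here is harmless because we are producing a function on $\tfo\cM$ rather than pulling back along a nonzero section.
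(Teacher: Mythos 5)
Your proof is correct and takes essentially the same route as the paper, which gives no separate proof but presents the proposition as a summary (``Putting everything together'') of Proposition~\ref{jetprop}, the ideal-independence remark immediately after it, and the statement that an element of $\cin(\tfo\cM)=\Gamma[\bar S(T^\star\cM)]$ is uniquely determined by its vertical jets $J^n_{\tfo}$. Your explicit treatment of the bijection between $\cin(\tfo\cM)$ and collections of symmetric $\cin(\cM)$-multilinear maps (resting on the finite rank of $T\cM$, hence projectivity of $\Gamma[T\cM]$) merely spells out a step the paper leaves implicit, and your existence/uniqueness/descent organisation matches the paper's logic.
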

\noindent In adapted local coordinates this map is realised by the replacement $v^a\to \varepsilon^a$ inside the Taylor expansion of $g(z,v)$ around $v=0$.

\subsection{Definition of formal exponential maps}
\begin{definition}
\label{fexpdef}
A map (of graded manifolds) $\fexp:\tfo\cM\to \cM$ is a \emph{formal exponential map} if and only if it satisfies the two properties
\begin{subequations}
\begin{align}
   J^0(\fexp^\star f)=f \qquad \forall f\in \cin(\cM)\,, \label{eq:fexp:def:propertyA}
 \\
   \text{if } f\in \cin(\cM)\,,\qquad \dr f=0 \iff J^1(\fexp^\star f)(X)=0 \quad \forall X\in \Gamma[TX]   \label{eq:fexp:def:propertyB}\,.
\end{align}
\end{subequations}
We will call $\fexp$ a \emph{proper} formal exponential map in the special case
\be
\label{eq:fexp:def:proper}
J^1(\fexp^\star f)(X)=X(f)\,.
\ee
(The first property \eqref{eq:fexp:def:propertyA} can also be stated more transparently as  $f=\sfs_{\tfo}^\star \fexp^\star f\; \forall f$ or as $\fexp\circ \sfs_{\tfo}=1_{\cM}$, for $\sfs_{\tfo}$ the zero section $\cM\into \tfo\cM$.)
\end{definition}

These properties serve to characterise the first two terms in the expansion of $\fexp^\star f$ in resolution degree; we have
\be
\label{fexpTaylor}
(\fexp^\star f)(z,\varepsilon)= f(z) + \varepsilon^a e_a^b(z)\frac{\pd f}{\pd z^b}|_{\pd\varepsilon=0} +\dots
\ee
where $e_a^b(z)$ is required to admit an inverse by property \eqref{eq:fexp:def:propertyB}:
\be
e_a^b(z)c_b^c(z)=\delta^c_a
\ee
for some $c_b^c$ depending on $z$. For a \emph{proper} formal exponential map $e_a^b(z)$ is the unit matrix: $e_a^b(z)=\delta_a^b$. We can equivalently characterise a formal exponential map by demanding its expansion takes the form \eqref{fexpTaylor} everywhere, which will be convenient later.


\section{The Grothendieck connection of a formal exponential map}
\label{sec:grothendieck}
\subsection{Uniqueness}
The relation between the formal exponential map $\fexp$ and associated flat connection $D$ can be understood in terms of a simple diagram of \emph{graded manifolds with distinguished vector fields}. A morphism of such is a map $\varphi:\cM\to \mathcal N$ of graded manifolds that relates the distinguished vector fields $V_{\cM}, V_{\mathcal N}$:
\be
\varphi^\star V_{\mathcal N}= V_{\mathcal M} \varphi^\star\,.
\ee
If $V_{\mathcal M}$ and $V_{\mathcal N}$ both square to zero, this is the more familiar notion of a \emph{dg-map}, also known as a \emph{morphism of Q-manifolds}.

The diagram in question is
\be
\label{DiagramThatsDg}
\begin{tikzcd}[every label/.append style = {font = \small}]
        &    T[1]\cM\oplus \tfo\cM \ar[d, "\pi_{T[1]}"',two heads]\ar[ddr,"\mathsf{e}",two heads,bend left=30]&  \\
        &    T[1]\cM                                                                                                            & \\
 \cM\ar[uur,"\mathsf{s}_0",hook,bend left=30]\ar[rr,"1_\cM"]&                                                                               & \cM  
\end{tikzcd}
\ee
Here $\pi_{T[1]}$ is the vector bundle projection, $\mathsf{s}_0$ is the zero section of $T[1]\cM\oplus \tfo \cM$, and $\sfe$ is identified via (what will turn out to be) the formal exponential map:
\be
\label{efact}
\sfe=\fexp\circ \pi_{\tfo}\,,
\ee
where $\pi_{\tfo}:T[1]\cM\oplus \tfo \cM\onto \tfo \cM$ is the bundle projection.

\begin{proposition}\label{uniqueness:prop}If there exist
\begin{itemize}
\item a map $e:T[1]\cM\oplus \tfo\cM\to \cM$ of graded manifolds that factors as \eqref{efact} for some $\fexp:\tfo\cM\to \cM$,
\item and a vector field $D$ on $T[1]\cM\oplus \tfo\cM$ with form degree $1$
\end{itemize}
 so that \eqref{DiagramThatsDg} is a commutative diagram of graded manifolds with distinguished vector fields (those being $D$, $\dr$, and $0$ on $T[1]\cM\oplus \tfo \cM$, $T[1]\cM$, and $\cM$ respectively), then
 \begin{enumerate}
    \item $\fexp$ is a formal exponential map,
    \item $D$ is uniquely determined in terms of $\fexp$ and vice versa.
 \end{enumerate}
\end{proposition}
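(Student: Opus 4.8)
The plan is to work in the adapted local coordinates $\{z^a,\dr z^a,\varepsilon^a\}$ on $T[1]\cM\oplus\tfo\cM$ and to unpack the data encoded in \eqref{DiagramThatsDg}: the plain commutativity of the lower triangle, together with the three morphism equations attached to $\pi_{T[1]}$, $\sfe$ and $\mathsf{s}_0$. First I would read off from $\pi_{T[1]}^\star\dr = D\,\pi_{T[1]}^\star$, applied to $z^a$ and to $\dr z^a$, that $Dz^a=\dr z^a$ and $D(\dr z^a)=0$; thus $D$ agrees with $\dr$ on the $T[1]\cM$-coordinates, and the only undetermined data is $D\varepsilon^a$. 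Since $D$ has form degree $1$, this must take the shape $D\varepsilon^a=\dr z^b\,A_b^a(z,\varepsilon)$ for local functions $A_b^a$. With this form every $Dg$ carries a factor of $\dr z$, so the $\mathsf{s}_0$-condition $\mathsf{s}_0^\star(Dg)=0$ is automatic and imposes nothing further.

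Next I would prove claim 1. Property \eqref{eq:fexp:def:propertyA} is immediate from the plain commutativity $\sfe\circ\mathsf{s}_0=1_\cM$: factoring $\sfe=\fexp\circ\pi_{\tfo}$ as in \eqref{efact} and using $\pi_{\tfo}\circ\mathsf{s}_0=\sfs_{\tfo}$ gives $\fexp\circ\sfs_{\tfo}=1_\cM$. For property \eqref{eq:fexp:def:propertyB} I would expand the remaining $\sfe$-condition. Since $\sfe^\star f=\pi_{\tfo}^\star\fexp^\star f$ is independent of $\dr z^a$, the equation $D\,\sfe^\star f=0$ becomes, after factoring the linearly independent $\dr z^b$,
\be
\pd_b(\fexp^\star f)+A_b^a\,\pd_{\varepsilon^a}(\fexp^\star f)=0\qquad\forall b\,,\ \forall f\in\cin(\cM)\,.
\ee
Writing $(\,\cdot\,)_n$ for the resolution-degree-$n$ component, the resolution-degree-$0$ part reads $\pd_b f+(A_b^a)_0\,e_a^c\,\pd_c f=0$ for all $f$, using the expansion \eqref{fexpTaylor}; hence $(A_b^a)_0\,e_a^c=-\delta_b^c$. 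Thus $e_a^c$ admits the left inverse $-(A_b^a)_0$, so it is invertible, which is exactly the content of property \eqref{eq:fexp:def:propertyB}.

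For claim 2 I would exploit the same displayed equation in both directions. Given $\fexp$, set $\varphi^c:=\fexp^\star z^c$ locally; the equation becomes $\pd_b\varphi^c+A_b^a\,\pd_{\varepsilon^a}\varphi^c=0$. The matrix $M_a^c:=\pd_{\varepsilon^a}\varphi^c$ has leading component $e_a^c$, which is invertible, so $M$ is invertible over the completed symmetric algebra generated by the $\varepsilon^a$ by a Neumann series; hence $A_b^a=-\pd_b\varphi^c\,(M^{-1})_c^a$ is forced and $D$ is determined. Conversely, given $D$ (equivalently $A_b^a$), I would read the displayed equation as a recursion in resolution degree: isolating the $k=0$ term shows $(A_b^a)_0\,\pd_{\varepsilon^a}F_{n+1}=(\text{terms in }F_0,\dots,F_n)$, where $F_n:=(\fexp^\star f)_n$. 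Since $(A_b^a)_0=-(e^{-1})_b^a$ is invertible this determines $\pd_{\varepsilon^a}F_{n+1}$, and the graded Euler operator $\varepsilon^a\pd_{\varepsilon^a}$ then reconstructs $F_{n+1}$ itself, which has no resolution-degree-$0$ part. The initial condition $F_0=f$ supplied by property \eqref{eq:fexp:def:propertyA} starts the recursion, so $\fexp^\star f$, and hence $\fexp$, is uniquely determined.

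The main obstacle I anticipate is not the recursion itself but justifying its two invertibility inputs uniformly: that the leading matrix $e_a^c$ (equivalently $(A_b^a)_0$) is invertible, and that this upgrades to invertibility of $M_a^c$ over the ring of formal power series in the $\varepsilon^a$. The other delicate point is bookkeeping the Koszul signs arising from commuting $\dr z^b$, $\varepsilon^a$ and the left derivations past one another, which I have suppressed above; once these are fixed, the factorisation of $\dr z^b$ and the Euler-operator reconstruction go through verbatim.
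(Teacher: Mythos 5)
Your proposal is correct and takes essentially the same route as the paper's proof: work in adapted local coordinates, use $D\pi_{T[1]}^\star=\pi_{T[1]}^\star\dr$ and $\formdeg D=1$ to reduce $D$ to the data $D\varepsilon^a=\dr z^b A_b^a(z,\varepsilon)$ (the paper's $C_a^b$), extract invertibility of the leading matrix $e_a^b$ from the resolution-degree-zero part of $D\sfe^\star=0$ (giving properties \eqref{eq:fexp:def:propertyA} and \eqref{eq:fexp:def:propertyB}), and then run the same equation as a recursion in resolution degree in both directions. Your closed-form Neumann-series inversion of $M_a^c=\pd_{\varepsilon^a}\fexp^\star z^c$ and the Euler-operator reconstruction of $F_{n+1}$ are mild streamlinings of the paper's order-by-order solution for $C_{\ell-1}$ and its direct extraction of the coefficients $(e_{\ell+1})^a_{cd_1\dots d_\ell}$, but the underlying argument is identical.
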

\begin{definition}
If a pair $(\fexp,D)$ exist satisfying the assumptions of Proposition \ref{uniqueness:prop} we will call $D$ the \emph{Grothendieck connection} uniquely associated to the formal exponential map $\fexp$.
\end{definition}
The diagrammatic assumptions read, in terms of formulas, \begin{align}
\sfe\circ\sfs_0&=1_{\cM}\,, \label{sfeassumption}\\
D\pi^\star_{T[1]}&=\pi^\star_{T[1]}\dr\,, \label{DvsdeRhamcondition}\\
 D\sfe^\star&=0 \,.\label{Dkillsecondition}
\end{align}

\begin{proof}[Proof (Proposition \ref{uniqueness:prop}).]
It suffices to consider $\sfe$ and $D$ in local coordinates. Condition \eqref{DvsdeRhamcondition} along with $\operatorname{formdeg}D=1$ imply $D$ takes the following form:
\be
D=\dr z^a \frac{\pd}{\pd z^a} + \dr z^a C_a^b(z,\varepsilon)\frac{\pd}{\pd \varepsilon^b}\,.
\ee
The matrix $C_a^b(z,\varepsilon)$ is a formal power series in $\varepsilon$. We expand it as 
\be
C_a^b(z,\varepsilon)={C_{-1}}^b_a(z) + \varepsilon^c {C_{0}}_{ac}^b(z) +\dots + \frac{1}{n!}(\varepsilon^{c_1}\cdots \varepsilon^{c_{n+1}}) {C_{n}}_{ac_1\dots c_n}^b(z)+\dots\,,
\ee
and consider the corresponding expansion of $D$:
\be
\label{Dexpansionearly}
D=\dr + \sum_{n=-1}^\infty C_n\,,\qquad \Big(C_n\equiv \dr z^a\frac{1}{n!}(\varepsilon^{c_1}\cdots \varepsilon^{c_{n+1}}) {C_{n}}_{ac_1\dots c_n}^b(z)\frac{\pd}{\pd \varepsilon^b}\Big)\,.
\ee
$C_n$ increases the resolution degree (polynomial degree in $\varepsilon$) by $n$.

The local coordinate expression of $\sfe$ is also an expansion in resolution degree:
\be
\label{sfeexpansion}
\sfe^\star(z^a)= z^a+\varepsilon^b {e_1}_b^a(z) +\dots+\frac{1}{n!} (\varepsilon^{b_1}\cdots \varepsilon^{b_n}) {e_n}^a_{b_1\dots b_n}(z)+\dots=z^a + e_1^a(z,\varepsilon) +\dots
\ee
We already used condition \eqref{sfeassumption} and $\sfe=\fexp\circ\pi_{\tfo}$ to deduce this expression.
The matrix ${e_1}_b^a(z)$ is essentially the differential of $\fexp$ at the zero section along the formal directions, and $e_{n>1}$ correspond to its higher jets.

We now organise condition \eqref{Dkillsecondition} in resolution degree $\ell$:
\be
D\sfe^\star=0\iff \begin{cases} \dr z^a + C_{-1} e^a_1=0  &(\ell=0)\\
\sum_{n=-1}^{\ell-1} C_n e^a_{\ell-n}=0  &(\ell>0)
\end{cases}
\ee
The $\ell=0$ condition is equivalent to ${C_{-1}}_c^b(z) {e_1}_b^a(z)=-\delta^a_c$, so ${e_1}_b^a$ is invertible. This implies $\fexp$ takes the form 
\be
\fexp^\star(z^a)=z^a+\varepsilon^b {e_1}_b^a(z) +\dots
\ee
for invertible $e_b^a$. Therefore $\fexp$ satisfies the two assumptions \eqref{eq:fexp:def:propertyA}, \eqref{eq:fexp:def:propertyB} defining a formal exponential map.

We can rewrite the $\ell>0$ conditions as
\be
C_{-1} e^a_{\ell+1}+ C_{\ell-1} e_1^a+\sum_{n=0}^{\ell-2} C_n e^a_{\ell-n}=0\,.
\ee
Given knowledge of the formal exponential map $\fexp$, the expressions $e^a_n$ are known for all $n$. Since ${e_1}_b^a$ is invertible, this formula determines $C_{\ell-1}$ recursively in terms of $C_{n<\ell-1}$. Conversely, if $D$ is known, the formula determines $e_{\ell+1}^a$ recursively in terms of $e_{1},e_{2},\dots e_{\ell}$: since
\be
C_{-1} e^a_{\ell+1}= \frac{1}{\ell!}(\dr z^b)\;{C_{-1}}^c_b\; \varepsilon^{d_1}\cdots \varepsilon^{d_\ell}\,{(e_{\ell+1})}^a_{cd_1\dots d_\ell}
\ee
and ${C_{-1}}_b^a$ is invertible, we can solve for ${(e_{\ell+1})}^a_{cd_1\dots d_\ell}(z)$. Thus, $D$ and $\fexp$ uniquely determine each other.
\end{proof}

\subsection{Flatness}
\begin{proposition}
\label{flatnessprop}
Under the assumptions of Proposition \ref{uniqueness:prop}, $D^2=0$ so in fact diagram \eqref{DiagramThatsDg} is a diagram of Q-manifolds, and $D$ is a flat connection.
\end{proposition}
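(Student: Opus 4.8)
The plan is to show that the a priori nonzero vector field $D^2$ annihilates a topological generating set of the functions on $T[1]\cM\oplus\tfo\cM$, and therefore vanishes identically. Since $D$ has form degree $1$ it is odd for the total grading, so $D^2=\frac{1}{2}[D,D]$ is again a graded derivation (a vector field), now of form degree $2$. It thus suffices to check $D^2$ on the coordinate generators $z^a$, $\dr z^a$ and $\varepsilon^a$, together with a continuity statement.

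The two diagrammatic relations immediately supply two families of functions killed by $D^2$. Applying $D$ to \eqref{DvsdeRhamcondition} and using $\dr^2=0$ gives $D^2\pi^\star_{T[1]}=(D\pi^\star_{T[1]})\dr=\pi^\star_{T[1]}\dr^2=0$; in particular $D^2 z^a=0$ and $D^2\dr z^a=0$, as $z^a$ and $\dr z^a$ are pullbacks along $\pi_{T[1]}$. Applying $D$ to \eqref{Dkillsecondition} gives $D^2\sfe^\star=D(D\sfe^\star)=0$, so $D^2$ also annihilates every $\sfe^\star f$, in particular the functions $\sfe^\star z^a$.

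The only generators left are the fibre coordinates $\varepsilon^a$, and the key step is to show these lie in the completed subalgebra generated by the functions already handled. From \eqref{sfeexpansion} we have $\sfe^\star z^a-z^a=\varepsilon^b\,{e_1}_b^a(z)+O(\varepsilon^2)$, with ${e_1}_b^a$ invertible (established in the proof of Proposition \ref{uniqueness:prop}). The formal inverse function theorem in the fibre directions then expresses each $\varepsilon^a$ as a formal power series in the quantities $\sfe^\star z^b-z^b$ with coefficients in $\cin(\cM)$. Consequently $z^a$, $\dr z^a$ and $\sfe^\star z^a$ topologically generate $\cin(T[1]\cM\oplus\tfo\cM)$.

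To conclude I would invoke continuity. Writing $D=\dr+\sum_{n\ge -1}C_n$ as in \eqref{Dexpansionearly}, the operator $D$ lowers the resolution degree by at most one (only the $C_{-1}$ term can lower it), so $D^2$ lowers it by at most two and is continuous for the resolution-degree filtration, with respect to which the function algebra is complete. A derivation that vanishes on the topological generators $z^a$, $\dr z^a$, $\sfe^\star z^a$ then vanishes on all polynomials in them and hence, by continuity, on the entire completed algebra. Thus $D^2=0$, the distinguished vector fields in \eqref{DiagramThatsDg} all square to zero, and $D$ is the flat Grothendieck connection claimed. I expect the only genuine obstacle to be the bookkeeping around the formal inversion of $\sfe^\star z^a-z^a$ and the attendant completeness/continuity argument; the rest follows directly from the two defining relations \eqref{DvsdeRhamcondition} and \eqref{Dkillsecondition}.
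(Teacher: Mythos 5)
Your argument is correct, but it is a genuinely different proof from the paper's. The paper normalises the exponential map: it constructs, order by order in resolution degree, a fibrewise formal diffeomorphism $\rho$ of $T[1]\cM\oplus\tfo\cM$ with $\rho\circ\sfs_0=\sfs_0$ and $\rho^\star\sfe^\star z^a=z^a+\varepsilon^a$, so that by the uniqueness statement of Proposition \ref{uniqueness:prop} the conjugated connection must be the canonical model $D'=\dr-\dr z^a\,\pd/\pd\varepsilon^a$, which visibly squares to zero; $D^2=0$ then follows by conjugation invariance. You instead observe that $D^2$ is itself a vector field of form degree $2$ and kill it on a topologically generating family: $D^2z^a=D^2\dr z^a=0$ from \eqref{DvsdeRhamcondition} and $\dr^2=0$, $D^2\sfe^\star z^a=0$ from \eqref{Dkillsecondition}, and then the fibre coordinates are reached by formally inverting $\sfe^\star z^a-z^a$ (legitimate since ${e_1}{}_b^a$ is invertible, as established in the uniqueness proof) together with completeness and continuity in the resolution-degree filtration ($D$ indeed lowers $\resdeg$ by at most one, via $C_{-1}$ only). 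Both proofs are local and hinge on the same invertibility of ${e_1}$, but yours is more direct --- no second appeal to uniqueness and no construction of $\rho$ --- and it treats non-proper $\fexp$ uniformly, whereas the paper writes out the existence of $\rho$ only in the proper case ${e_1}{}_b^a=\delta_b^a$ ``for simplicity''; what the paper's route buys is the explicit normal form of $(\sfe,D)$, which is reused later (e.g.\ in the linearisation of section \ref{sec:QP}). Two small repairs to your write-up: the oddness of $D$ (needed for $D^2=\tfrac12[D,D]$ to be a derivation) comes from $\deg D=1$, not from form degree --- in this paper parity is correlated with $\deg$, not with $\formdeg$, so ``form degree $1$ hence odd'' is not a valid inference, though the local normal form forced by \eqref{DvsdeRhamcondition} does make $D$ homogeneous of degree $1$; and you can bypass the formal inverse function theorem entirely: once $D^2z^a=D^2\dr z^a=0$ give $D^2=(D^2\varepsilon^b)\,\pd/\pd\varepsilon^b$ in the chart, applying this to \eqref{Dkillsecondition} yields $0=D^2(\sfe^\star z^a)=(D^2\varepsilon^b)\,\pd(\sfe^\star z^a)/\pd\varepsilon^b$, and the matrix $\pd(\sfe^\star z^a)/\pd\varepsilon^b={e_1}{}_b^a+O(\varepsilon)$ is invertible over formal power series, so $D^2\varepsilon^b=0$ directly.
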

\begin{proof}
The idea is to use an invertible map $\rho: T[1]\cM\oplus \tfo \cM\to T[1]\cM\oplus \tfo \cM$ so that $\rho^\star e^\star(z^a)=z^a+\varepsilon^a$. This $\rho$ only needs to exist in some neighbourhood of an arbitrary point of $M$, because $D^2=0$ is a local statement. (We will in fact define it in a single local coordinate patch.) If $\rho\circ \sfs_0=\sfs_0$ as well, Proposition \ref{uniqueness:prop} determines the unique $D'\equiv \rho^\star D(\rho\inv)^\star$ corresponding to $\sfe'\equiv \sfe\circ \rho$, which takes the form
\be
D'=\dr -\dr z^a \frac{\partial}{\pd \varepsilon^a}\,.
\ee
This squares to zero, and so $D$ will also square to zero.

We first define $\rho^\star(z^a)=z^a\,,\rho^\star(\dr z^a)=\dr z^a$. We consider the case where ${e_1}_b^a=\delta^a_b$ i.e.~that of a proper formal exponential map for simplicity. Then we define $\rho^\star$ in another expansion in resolution degree:
\be
\rho^\star(\varepsilon^a)=\varepsilon^a +\frac{1}{2} \varepsilon^{b_1}\varepsilon^{b_2} {\rho_2}^a_{b_1b_2}(z)+\dots\,.
\ee
For this ansatz we have $\rho\circ \sfs_0=\sfs_0$ as desired. Now $\rho^\star \sfe^\star z^a=z^a+\varepsilon^a$ can be solved  order-by-order in resolution degree for the coefficients ${\rho_n}^a_{b_1\dots b_n}(z)$  in terms of the coefficients ${e_m}^a_{b_1\dots b_n}(z)$ ($m\leq n$) defining the local expression for the exponential map as well as the $\rho_{n'}$ ($n'< n$).
\end{proof}

\subsection{Transferring to a diffeomorphic manifold}
\label{sec:diffeo}
Given a formal exponential map along with its Grothendieck connection $(\fexp,D)$ on some graded manifold $\cM$, we will obtain a pair $(\overline{\fexp},\overline{D})$ on a diffeomorphic manifold $\overline{\cM}$, related to $\cM$ via a diffeomorphism
\be
\varphi:\overline{\cM}\to\cM\,.
\ee
This is essentially trivial but will be useful for constructing $D$ from $\fexp$ shortly.

We will need notation for the differential of $\varphi$ relevant for each kind of tangent bundle. We will call $T[1]\varphi,\tfo\varphi, T_\mathsf{big}\varphi$ respectively the maps $T[1]\overline{\cM}\to T[1]\cM$, $\tfo\overline{\cM}\to \tfo\cM$, and $T[1]\overline\cM\oplus \tfo\overline\cM\to T[1]\cM\oplus \tfo\cM$, all induced by $\varphi$. The last two fit into the commutative diagram
\be
\begin{tikzcd}
T[1]\overline{\cM}\oplus \tfo\overline{\cM} \ar[r,"T_{\mathrm{big}}\varphi"] \ar[d,"\overline{\pi_{\tfo}}"]           &T[1]\cM\oplus \tfo\cM \ar[d,"\pi_{\tfo}"] \\
\tfo\overline{\cM}\ar[r,"\tfo\varphi"] & \tfo\cM
\end{tikzcd}
\ee
whence the identity
\be
\tfo\varphi \circ \overline{\pi_{\tfo}}= \pi_{\tfo}\circ T_{\mathrm{big}}\varphi\,. \label{eq:derivativeidentity}
\ee

Since $D$ is a vector field on $T_\mathrm{big}\cM$ it will pullback to $\overline D\equiv\big(T_\mathrm{big}\varphi\big)^\star D \big(T_\mathrm{big} (\varphi^{-1})\big)^\star$. For $\fexp$ we instead have $\overline{\fexp}\equiv\varphi^{-1}\circ\fexp\circ \tfo\varphi$ which is clearly a formal exponential map $\tfo\overline\cM\to \overline \cM$ whenever $\fexp$ is. Showing that $\overline{D}$ is the Grothendieck connection associated to $\overline{\fexp}$ requires a short calculation.

For $\overline e\equiv \overline{\fexp}\circ\overline{\pi_{\tfo}}$, identity \eqref{eq:derivativeidentity} implies
\begin{align}
\overline e=\varphi^{-1}\circ\fexp\circ \tfo\varphi \circ\overline{\pi_{\tfo}}&=\varphi^{-1}\circ\fexp \circ \pi_{\tfo}\circ T_{\mathrm{big}}\varphi\\
&=\varphi^{-1} \circ \sfe \circ T_\mathrm{big}\varphi
\end{align}
Therefore we find that condition \eqref{Dkillsecondition} that determines the Grothendieck connection in terms of the exponential map is satisfied:
\be
\overline{D}\overline{\sfe}^\star=\big(T_\mathrm{big}\varphi\big)^\star D \big(T_\mathrm{big} (\varphi^{-1})\big)^\star \Big( (T_\mathrm{big}\varphi)^\star \sfe^\star (\varphi^{-1})^\star \Big)=\big(T_\mathrm{big}\varphi\big)^\star D \sfe^\star  (\varphi^{-1})^\star =0\,.
\ee

The pullback $(T_\mathrm{big}\varphi)^\star$ preserves form degree. This is even more obvious from the coordinate expression, that we list here to be totally explicit:
\be
\label{eq:transitionfunctions}
(T_\mathrm{big}\varphi)^\star z^a=\varphi^\star(z^a)\,,\quad (T_\mathrm{big}\varphi)^\star (\dr z^a)= \dr \overline{z}^b \frac{\pd \varphi^\star z^a}{\pd \overline{z}^b}\,,\quad (T_\mathrm{big}\varphi)^\star\varepsilon^a= \overline{\varepsilon}^b \frac{\pd \varphi^\star z^a}{\pd \overline z^b}\,.
\ee
Therefore the assumptions of Proposition \ref{uniqueness:prop} are satisfied and we have
\begin{proposition}
\label{diffprop}
If $(\fexp,D)$ are a pair of a formal exponential map and its Grothendieck connection on a graded manifold $\cM$, and $\varphi:\overline{\cM}\to \cM$ is a diffeomorphism, then $(\overline{\fexp},\overline{D})$ are another such pair on $\overline{\cM}$, explicitly given by
\be
\label{diffformula}
\overline{\fexp}\equiv\varphi^{-1}\circ\fexp\circ \tfo\varphi\,,\qquad  \overline D\equiv\big(T_\mathrm{big}\varphi\big)^\star D \big(T_\mathrm{big} (\varphi^{-1})\big)^\star\,.
\ee
\end{proposition}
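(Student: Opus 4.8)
The plan is to verify that the pair $(\overline{\sfe},\overline{D})$, with $\overline{\sfe}\equiv\overline{\fexp}\circ\overline{\pi_{\tfo}}$, satisfies the three diagrammatic hypotheses \eqref{sfeassumption}, \eqref{DvsdeRhamcondition}, \eqref{Dkillsecondition} of Proposition \ref{uniqueness:prop} together with $\operatorname{formdeg}\overline{D}=1$, and then simply to quote that proposition. Since Proposition \ref{uniqueness:prop} delivers both conclusions at once, this will show that $\overline{\fexp}$ is a formal exponential map and that $\overline{D}$ is the unique Grothendieck connection associated to it, with no separate verification of the defining properties \eqref{eq:fexp:def:propertyA}--\eqref{eq:fexp:def:propertyB} needed. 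Two of the ingredients are already in hand: the preceding discussion established \eqref{Dkillsecondition}, namely $\overline{D}\,\overline{\sfe}^\star=0$, using the factorisation $\overline{\sfe}=\varphi^{-1}\circ\sfe\circ T_\mathrm{big}\varphi$ that follows from identity \eqref{eq:derivativeidentity}, and the explicit transition formulas \eqref{eq:transitionfunctions} show that $(T_\mathrm{big}\varphi)^\star$ preserves form degree, so $\operatorname{formdeg}\overline{D}=\operatorname{formdeg}D=1$. The required factorisation of $\overline{\sfe}$ through $\overline{\pi_{\tfo}}$ is built into its definition.

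It then remains to check the section condition \eqref{sfeassumption} and the de Rham compatibility \eqref{DvsdeRhamcondition}. For the former I would use naturality of the zero section under the differential, $T_\mathrm{big}\varphi\circ\overline{\sfs_0}=\sfs_0\circ\varphi$, which is immediate from \eqref{eq:transitionfunctions}: the zero section imposes $\dr z^a=\varepsilon^a=0$, and these vanishing conditions are preserved by $(T_\mathrm{big}\varphi)^\star$. Combining this with \eqref{sfeassumption} for the original pair gives $\overline{\sfe}\circ\overline{\sfs_0}=\varphi^{-1}\circ\sfe\circ\sfs_0\circ\varphi=\varphi^{-1}\circ 1_\cM\circ\varphi=1_{\overline{\cM}}$.

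For \eqref{DvsdeRhamcondition} I would exploit functoriality of the three tangent-bundle constructions. The projection $\pi_{T[1]}$ intertwines with $T_\mathrm{big}\varphi$ and $T[1]\varphi$ exactly as $\pi_{\tfo}$ does in \eqref{eq:derivativeidentity}, yielding $(T_\mathrm{big}\varphi)^\star\pi^\star_{T[1]}=\overline{\pi_{T[1]}}^\star(T[1]\varphi)^\star$ together with the analogous identity $(T_\mathrm{big}(\varphi^{-1}))^\star\overline{\pi_{T[1]}}^\star=\pi^\star_{T[1]}(T[1](\varphi^{-1}))^\star$ for $\varphi^{-1}$. Substituting the definition $\overline{D}=(T_\mathrm{big}\varphi)^\star D\,(T_\mathrm{big}(\varphi^{-1}))^\star$ and pushing $\overline{\pi_{T[1]}}^\star$ through converts the left-hand side into an instance of \eqref{DvsdeRhamcondition} for $D$ itself; the de Rham differential can then be slid past $(T[1](\varphi^{-1}))^\star$ because $T[1]\varphi$ is a dg-map, i.e. $\dr$ commutes with pullback along any diffeomorphism. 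Finally the functoriality identity $T[1](\varphi^{-1})\circ T[1]\varphi=T[1](1_{\overline{\cM}})=1$ collapses the two surviving $T[1]$-pullbacks to the identity, leaving precisely $\overline{D}\,\overline{\pi_{T[1]}}^\star=\overline{\pi_{T[1]}}^\star\dr$.

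None of these steps is genuinely difficult, consonant with the remark that the construction is essentially trivial; the main obstacle is bookkeeping in this last step. One must keep the three differentials $T[1]\varphi$, $\tfo\varphi$, $T_\mathrm{big}\varphi$ distinct, apply each naturality square in the correct direction, and invoke functoriality (so that $T[1](\varphi^{-1})$ acts as a genuine inverse on $T[1]$-pullbacks, not merely as the differential of the inverse of $\varphi$) in order to cancel the extraneous factors. Once all four hypotheses are verified, both assertions of the proposition follow immediately from Proposition \ref{uniqueness:prop}.
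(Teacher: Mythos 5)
Your proposal is correct and follows essentially the same route as the paper: compute $\overline{\sfe}=\varphi^{-1}\circ\sfe\circ T_\mathrm{big}\varphi$ via the naturality identity \eqref{eq:derivativeidentity}, verify the hypotheses of Proposition \ref{uniqueness:prop} (in particular $\overline{D}\,\overline{\sfe}^\star=0$ and $\operatorname{formdeg}\overline{D}=1$ from \eqref{eq:transitionfunctions}), and then invoke that proposition to obtain both conclusions at once. Your explicit verifications of \eqref{sfeassumption} via naturality of the zero section and of \eqref{DvsdeRhamcondition} via functoriality of $T[1]$ are correct details that the paper leaves tacit, so they strengthen rather than alter the argument.
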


\subsection{Existence}
In the proof of Proposition \ref{uniqueness:prop} we explicitly construct a Grothendieck connection $D$ given a formal exponential map $\fexp$ on any local coordinate chart of $\cM$; it remains to argue that these patch together to globally-defined vector fields on $T[1]\cM\oplus \tfo\cM$.

We cover the body $M$ with a collection $\{U_\alpha\}$ of open sets that each admit a local coordinate chart of $\cM$ and consider the associated transition functions $\varphi_{\alpha\beta}$, which are diffeomorphisms on overlaps $U_\alpha\cap U_\beta$. This atlas induces local coordinate charts of the tangent bundle $T\cM$ and the related bundles $T[1]\cM,\tfo \cM,\dots$ that interest us; their transition functions will be given by the differentials $T[1]\varphi_{\alpha\beta},\tfo\varphi_{\alpha\beta},\dots$.

We denote $\fexp_{\alpha}$  the representative of $\fexp:\tfo\cM\to \cM$ associated to the local charts of $\cM$ and $\tfo \cM$ respectively on $U_\alpha$ and $\fexp_\beta$ the same on $U_\beta$. Since $\fexp$ is a globally-defined map $\tfo \cM\to \cM$, $\fexp_\alpha|_{U_\alpha \cap U_\beta}$ and $\fexp_\beta|_{U_\alpha \cap U_\beta}$ are related to each other as in \eqref{diffformula} (for $\varphi=\varphi_{\alpha\beta}$); therefore, the locally defined vector fields $D_\alpha$, $D_\beta$  are also related to each other as in \eqref{diffformula}. But this is precisely how the local representatives of a globally-defined vector field on $T[1]\cM\oplus \tfo \cM$ are related to each other with this choice of atlas. This sketches a \emph{construction} of $D$ from the data of $\fexp$:
\begin{theorem}
\label{Dfromfexp:thm}
If $\fexp$ is a  formal exponential map, there \emph{exists} a (unique) Grothendieck connection associated to it.
\end{theorem}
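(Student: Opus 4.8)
The plan is to construct $D$ chart-by-chart and then glue, leveraging the local uniqueness of Proposition \ref{uniqueness:prop} and the functoriality under diffeomorphisms of Proposition \ref{diffprop}. First I would cover the body $M$ by open sets $\{U_\alpha\}$ each carrying an adapted coordinate chart of $\cM$, and hence induced charts of $T[1]\cM$, $\tfo\cM$ and $T[1]\cM\oplus\tfo\cM$. On each $U_\alpha$ the formal exponential map is represented by $\fexp_\alpha$, so $\sfe_\alpha=\fexp_\alpha\circ\pi_{\tfo}$ and the hypotheses of Proposition \ref{uniqueness:prop} hold locally. That proposition produces a \emph{unique} form-degree-$1$ vector field $D_\alpha$ on $T[1]\cM\oplus\tfo\cM$ over $U_\alpha$ satisfying conditions \eqref{DvsdeRhamcondition}--\eqref{Dkillsecondition}, determined order-by-order in resolution degree by the recursion in its proof. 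This yields a family $\{D_\alpha\}$ of local candidates, and the entire content of the theorem is that they agree on overlaps.

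Next I would verify the gluing on each overlap $U_\alpha\cap U_\beta$. Since $\fexp$ is a genuine global map $\tfo\cM\to\cM$, its two local representatives are related exactly by the change-of-chart formula \eqref{diffformula} with $\varphi=\varphi_{\alpha\beta}$ the transition diffeomorphism; that is, $\fexp_\beta$ plays the role of $\overline{\fexp}$ in Proposition \ref{diffprop}. That proposition then identifies the Grothendieck connection of $\fexp_\beta$ as the transfer $\overline{D}=(T_\mathrm{big}\varphi_{\alpha\beta})^\star D_\alpha\,((T_\mathrm{big}\varphi_{\alpha\beta})^{-1})^\star$, and local uniqueness (Proposition \ref{uniqueness:prop} applied in the $\beta$-chart) forces this transfer to coincide with $D_\beta$. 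But the relation $D_\beta=(T_\mathrm{big}\varphi_{\alpha\beta})^\star D_\alpha\,((T_\mathrm{big}\varphi_{\alpha\beta})^{-1})^\star$ says precisely that $D_\alpha$ and $D_\beta$ are $T_\mathrm{big}\varphi_{\alpha\beta}$-related, i.e. that they are the local representatives of one globally-defined vector field with respect to the transition functions \eqref{eq:transitionfunctions}. Hence the $D_\alpha$ patch to a global $D$ on $T[1]\cM\oplus\tfo\cM$.

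The uniqueness clause then follows at once: any Grothendieck connection associated to $\fexp$ restricts on each $U_\alpha$ to a form-degree-$1$ vector field satisfying the hypotheses of Proposition \ref{uniqueness:prop} for $\fexp_\alpha$, hence equals $D_\alpha$ there; agreeing on every chart of the atlas, it equals $D$.

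The step I expect to require the most care is the identification in the second paragraph: checking that the transfer law of Proposition \ref{diffprop} is literally the cocycle condition defining a global vector field, and that it is consistent on triple overlaps. Two observations make this manageable rather than delicate. First, the transition functions \eqref{eq:transitionfunctions} act \emph{linearly} on the fibre coordinates $\varepsilon^a$, so they preserve resolution degree; the gluing can therefore be verified one resolution degree at a time, and the formal-power-series nature of the coefficients $C_a^b(z,\varepsilon)$ raises no convergence issue — in contrast to the section subtlety noted at the end of subsection \ref{subsection:tforesdeg}, which concerns sections with nonvanishing resolution-degree-zero component, whereas $D$ is a vector field. Second, consistency on triple overlaps $U_\alpha\cap U_\beta\cap U_\gamma$ is automatic: each restriction is forced by local uniqueness to equal the same $D_\gamma$, so the cocycle identity for $D$ is inherited from that of the transition diffeomorphisms $\varphi_{\alpha\beta}$ of $M$.
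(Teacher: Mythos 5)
Your proposal is correct and takes essentially the same route as the paper's own proof: construct $D_\alpha$ chart-by-chart via the recursion of Proposition \ref{uniqueness:prop}, then glue on overlaps by observing that the global nature of $\fexp$ forces $\fexp_\alpha$ and $\fexp_\beta$ to be related as in \eqref{diffformula}, so that Proposition \ref{diffprop} together with local uniqueness makes the $D_\alpha$ transform exactly as the local representatives of a single global vector field. Your additional remarks on triple-overlap consistency and on the transition functions \eqref{eq:transitionfunctions} preserving resolution degree simply make explicit points the paper's sketch leaves implicit.
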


\section{Cohomology of the Grothendieck connection}
\label{sec:cohomology}
We will calculate the cohomology of $D$, in steps. This cohomology will turn out to be isomorphic to the ring of functions $\cin(\cM)$; thus the complex $\cin(T[1]\cM\oplus \tfo\cM)$ equipped with the differential $D$ will provide a \emph{resolution} of $\cin(\cM)$. (This motivated the name ``resolution degree''.) 

\subsection{Acyclicity of $D_{-1}$}
We start from the expansion of $D$ in resolution degree. This is in fact globally well-defined, as the transition functions of $T[1]\cM\oplus \tfo\cM$ can be chosen to preserve resolution degree, see \eqref{eq:transitionfunctions}. (This exploits the fact $T[1]\cM\oplus \tfo\cM$ is not just a graded manifold, but a graded vector bundle.) Said expansion is
\be
\label{Dexpansion}
D=D_{-1}+ D_0 + D_{1}+\dots
\ee
This is almost the same as the expansion \eqref{Dexpansionearly} in local coordinates, with $D_n=C_n$ for $n\neq 0$ and $D_0=\dr + C_0$ for $n=0$ (for $\dr=\dr z^a \pd_a$).

The flatness condition $D^2=0$ gives rise to an infinite number of equations indexed by resolution degree, the first three of which are
\be
\label{expandedflatness}
D_{-1}^2=0\,,\qquad  D_{-1}D_0 +  D_0 D_{-1}=0\,,\qquad D_0^2=-(D_{-1}D_{+1}+ D_{+1} D_{-1})\,,\qquad \dots
\ee

Therefore $D_{-1}$ squares to zero. We will show $D_{-1}$ is \emph{acyclic}, i.e.~its homology is concentrated in $\deg_{\mathrm{HT}}=0$, where this new degree is defined as\be
\deg_{\mathrm{HT}}\equiv \operatorname{resdeg}+\operatorname{formdeg}\,.
\ee
(The subscript HT is motivated in appendix \ref{app:HT}.) For proper formal exponential maps which arise from a choice of connection on $T\cM$ this is proven by Liao-Sti\'{e}non \cite{Liao:2015lta} using a technique by Dolgushev \cite{dolgushev2005covariant}; we will give a short argument that also works for not-necessarily-proper formal exponential maps, using essentially the same tools.

We invoke a vector field $\zeta$ on $T[1]\oplus T[\varepsilon]$ which is required to have (for $\delta\equiv D_{-1}$) 
\be
\label{epsilonHT}
\deg \zeta=1\,,\qquad\operatorname{resdeg}\zeta=1\,,\qquad \operatorname{formdeg} \zeta=-1\,,\qquad
\zeta \delta+\delta \zeta= \dr z^a \frac{\pd}{\pd \dr z^a}+ \varepsilon^a \frac{\pd}{\pd \varepsilon^a}\,.
\ee
The last expression is the (globally-defined) vector field $\epsilon_{\mathrm{HT}}$ that counts $\deg_{\mathrm{HT}}$.

The vector field $\zeta$ can be constructed locally as 
\be
\zeta= \varepsilon^b e_b^a(z) \frac{\pd}{\pd \dr z^a}
\ee
for $e_b^a(z)$ as in \eqref{sfeexpansion} the inverse of $C_a^b(z)$ up to a sign. This can be glued consistently as in the proof to Theorem \ref{Dfromfexp:thm}.

We then introduce the deformation retract $h$ of $T[1]\oplus T[\varepsilon]$ to its zero section that is a global linear scaling of each fibre to zero. Explicitly, in local coordinates:
\be
\label{deformationretractdef}
\begin{split}
   h:[0,1]\times (T[1]\oplus T[\varepsilon])\to T[1]\oplus T[\varepsilon]\,,\\
   h^\star z^a=z^a\,,\quad h^\star \dr z^a=t \dr z^a\,,\quad h^\star \varepsilon^a=t \varepsilon^a\,, \qquad t\in[0,1]\,.
\end{split}
\ee
Degree-counting implies $\zeta$ and $\delta$ commute with $h^\star$. Then a suitable homotopy is
\be
\label{homotopydef}
H f\equiv \zeta \int_0^1 (h^\star f)(t) \frac{\dr t}{t}\,,\qquad H:\cin(T[1]\oplus T[\varepsilon])\to \cin(T[1]\oplus T[\varepsilon])\,,
\ee
which is well-defined and satisfies
\be
\label{homotopyidentity}
(H\delta +\delta H)f=\begin{cases}  f &(\deg_{\mathrm{HT}}f\neq 0)\,,\\
                                    0 &(\deg_{\mathrm{HT}}f= 0)\,.
                     \end{cases}
\ee
This proves
\begin{proposition}
\label{acyclicityOfDelta:prop}
If $D$ is the Grothendieck connection of a formal exponential map, then $H_0(D_{-1})\cong \cin(\cM)$, $H_{k}(D_{-1})=0$ ($k>0$) where $k$ indexes $\deg_{\mathrm{HT}}$.
\end{proposition}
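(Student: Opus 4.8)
The plan is to recognise the proposition as the standard homological consequence of possessing a contracting homotopy: once we have an operator $H$ obeying $(H\delta+\delta H)f = f$ on every $f$ with $\deg_{\mathrm{HT}}f\neq 0$ and $(H\delta+\delta H)f=0$ otherwise, as asserted in \eqref{homotopyidentity}, both the acyclicity in positive $\deg_{\mathrm{HT}}$ and the identification of the degree-zero homology follow by bookkeeping. So the substance is to produce $H$ and check \eqref{homotopyidentity}; the conclusion is then short. First I would record that $\delta=D_{-1}=\dr z^a {C_{-1}}^b_a(z)\,\pd/\pd\varepsilon^b$ carries bidegree $(-1,+1)$ in $(\operatorname{resdeg},\operatorname{formdeg})$ and hence preserves $\deg_{\mathrm{HT}}$, so that $\cin(T[1]\cM\oplus\tfo\cM)$ splits as a direct sum of $\delta$-subcomplexes indexed by $k=\deg_{\mathrm{HT}}$, and $H_k(D_{-1})$ is by definition the homology of the $k$-th summand.

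The key algebraic input is the anticommutator relation \eqref{epsilonHT}, $\zeta\delta+\delta\zeta=\epsilon_{\mathrm{HT}}$, with $\epsilon_{\mathrm{HT}}=\dr z^a\,\pd/\pd\dr z^a+\varepsilon^a\,\pd/\pd\varepsilon^a$ the Euler field counting $\deg_{\mathrm{HT}}$. I would verify this by the stated local computation: with $\zeta=\varepsilon^b e_b^a(z)\,\pd/\pd\dr z^a$, the Leibniz rule produces in $\delta\zeta$ a first-order term contracting $\pd/\pd\dr z^a$ against $\dr z^c$ and in $\zeta\delta$ a first-order term contracting $\pd/\pd\varepsilon^d$ against $\varepsilon^b$, while the two second-order terms are equal up to Koszul sign and cancel; the surviving first-order pieces collapse precisely to $\dr z^a\,\pd/\pd\dr z^a+\varepsilon^a\,\pd/\pd\varepsilon^a$ because $e$ is the inverse of $C_{-1}$ established in the $\ell=0$ step of Proposition \ref{uniqueness:prop}, with the sign chosen so that the combination is $+\epsilon_{\mathrm{HT}}$ (this is the point of the phrase ``up to a sign''). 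Globality of $\zeta$, hence of the relation, follows from the same gluing argument as in Theorem \ref{Dfromfexp:thm}, since $\zeta$ is assembled from fibre coordinates and the inverse differential, which transform tensorially under \eqref{eq:transitionfunctions}.

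To turn this into a genuine homotopy I would invert $\epsilon_{\mathrm{HT}}$ on positive degree using the fibre-scaling retract $h$ of \eqref{deformationretractdef}. Degree counting shows $h^\star$ commutes with both $\delta$ and $\zeta$, so $H\delta+\delta H=(\zeta\delta+\delta\zeta)\int_0^1 h^\star(\cdot)\,\dr t/t=\epsilon_{\mathrm{HT}}\int_0^1 h^\star(\cdot)\,\dr t/t$. On a homogeneous $f$ with $\deg_{\mathrm{HT}}f=k$ one has $h^\star f=t^k f$, whence $\int_0^1 t^k f\,\dr t/t=f/k$ for $k>0$ and $\epsilon_{\mathrm{HT}}(f/k)=f$, giving \eqref{homotopyidentity}. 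The delicate point, and where I expect the only real care to be needed, is $k=0$: there the integral $\int_0^1 f\,\dr t/t$ diverges, but $\zeta$ annihilates any function of $z$ alone (there is no $\dr z$ for $\pd/\pd\dr z^a$ to hit), so $Hf=0$, while at the same time $\delta f=0$ because $\delta$ differentiates in $\varepsilon$; thus both sides of \eqref{homotopyidentity} vanish and $H$ is well-defined after all. Reconciling this formal divergence with the vanishing of $\zeta$ on degree zero is the one step demanding attention.

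Finally I would read off the homology. For $k>0$ the identity exhibits the identity map on the $k$-th subcomplex as null-homotopic, so that subcomplex is contractible and $H_k(D_{-1})=0$. For $k=0$, non-negativity of both $\operatorname{resdeg}$ and $\operatorname{formdeg}$ forces each to vanish, so the degree-zero part is exactly $\cin(\cM)$; there $\delta$ vanishes identically and nothing maps into it, since a differential of bidegree $(-1,+1)$ would have to originate from the empty bidegree $(1,-1)$. Hence $H_0(D_{-1})\cong\cin(\cM)$, completing the proof.
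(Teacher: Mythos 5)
Your proposal is correct and takes essentially the same route as the paper: the same contracting vector field $\zeta$ (with its local formula and gluing as in Theorem \ref{Dfromfexp:thm}), the same fibre-scaling retract $h$ of \eqref{deformationretractdef}, the same homotopy $H$ of \eqref{homotopydef}, and the key identity \eqref{homotopyidentity}. You merely spell out steps the paper leaves implicit --- the local verification of \eqref{epsilonHT}, the commutation of $h^\star$ with $\delta$ and $\zeta$, the $k=0$ well-definedness of $H$ (resolved by applying $\zeta$ under the integral), and the explicit read-off of $H_0(D_{-1})\cong\cin(\cM)$ --- all of which you handle correctly.
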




\subsection{Calculation of the cohomology}
At this point the cohomology of $D$ can be determined by invoking an abstract homological perturbation result. However we calculate it explicitly here in order to argue, later, that the same calculation works when we replace $\cM$ by its restriction to some open $U\subseteq M$ of the body $M$. (Also, this specific case seems simpler to prove than the general homological perturbation scenario.) In appendix \ref{app:HT} we make the analogy with the homological interpretation of the BV-BRST/antifield formalism explicit.

For a function $f\in \cin(T[1]\cM\oplus \tfo \cM)$ we will say $\resdeg f\geq n$ if all terms in its expansion in $\resdeg$ vanish below $\resdeg=n$. 
\begin{lemma}
\label{cohomologylemma1}
If $Df=0$ and $f$ is cohomologous to $g$ of $\formdeg$ $k$ with $\resdeg g\geq n>0$, then $f$ is cohomologous to $g'$ of $\formdeg$ $k$ with $\resdeg g'\geq n+1$ (so in fact $f$ is trivial in cohomology).
\end{lemma}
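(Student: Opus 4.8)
The plan is to exploit the filtration of $D$ by resolution degree recorded in \eqref{Dexpansion}, treating $\delta\equiv D_{-1}$ as the principal part and $D-\delta$ as a perturbation that strictly raises $\resdeg$. I would write the closed representative as $g=\sum_{j\ge n}g_j$, where $g_j$ denotes the resolution-degree-$j$ component (each of form degree $k$); note that $g$ is $D$-closed because it is cohomologous to the closed $f$. The idea is then to peel off the bottom component $g_n$ by subtracting a single explicit $D$-exact term, leaving a cohomologous representative whose lowest resolution degree is $n+1$.

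First I would show that the bottom component $g_n$ is $\delta$-closed. Expanding $Dg=0$ in resolution degree and using that $D-\delta$ strictly raises $\resdeg$ (so that the vanishing components $g_j=0$ for $j<n$ contribute nothing), the lowest surviving piece sits at resolution degree $n-1$ and reads $\delta g_n=0$. Since $\deg_{\mathrm{HT}}(g_n)=n+k$, and the hypotheses give $n>0$ while the form degree satisfies $k\ge 0$, we have $\deg_{\mathrm{HT}}(g_n)>0$. The homotopy identity \eqref{homotopyidentity} (equivalently, acyclicity of $\delta$ in positive $\deg_{\mathrm{HT}}$, Proposition \ref{acyclicityOfDelta:prop}) then yields $g_n=\delta(Hg_n)$.

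Next I would set $\lambda\equiv Hg_n$ and form $g'\equiv g-D\lambda$. Because $H$ carries $\resdeg\mapsto\resdeg+1$ and $\formdeg\mapsto\formdeg-1$, the correction $\lambda$ has resolution degree $n+1$ and form degree $k-1$, so $D\lambda=\delta\lambda+(D-\delta)\lambda=g_n+(\text{terms of }\resdeg\ge n+1)$. Hence $g'$ satisfies $\resdeg g'\ge n+1$ and $\formdeg g'=k$, is $D$-closed, and is cohomologous to $g$ and therefore to $f$; this is exactly the assertion. For the parenthetical triviality statement I would iterate the construction, producing corrections $\lambda_{(i)}$ with $\resdeg\lambda_{(i)}\ge n+1+i\to\infty$, so that $\Lambda\equiv\sum_i\lambda_{(i)}$ converges in the resolution-degree filtration (we work with formal power series in $\varepsilon$) and $g=D\Lambda$, making $f$ exact.

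The main obstacle is the degree bookkeeping rather than any hard estimate: one must verify that $\deg_{\mathrm{HT}}(g_n)\neq 0$ so that the homotopy applies — this is precisely where the hypothesis $n>0$ is essential, together with $\formdeg\ge 0$ — and that the single exact correction $D\lambda$ reproduces $g_n$ exactly at the bottom while depositing everything else at strictly higher resolution degree. The only analytic point, needed solely for the parenthetical, is the convergence of the iterated homotopy, which is immediate since each step raises the minimal resolution degree by one.
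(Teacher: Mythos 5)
Your proof is correct and follows essentially the same route as the paper's: isolate the bottom component $g_n$, deduce $D_{-1}g_n=0$ from the lowest resolution degree of $Dg=0$, write $g_n=D_{-1}Hg_n$ via \eqref{homotopyidentity}, and pass to $g'=g-DHg_n$, observing that $(D-D_{-1})Hg_n$ has $\resdeg\geq n+1$ since $H$ raises $\resdeg$ while $D-D_{-1}$ does not lower it. Your explicit verification that $\deg_{\mathrm{HT}}(g_n)=n+k>0$ (so the homotopy identity applies) and your iterated correction $\Lambda=\sum_i\lambda_{(i)}$, convergent in the resolution-degree filtration, merely spell out details the paper's terse proof leaves implicit, including the parenthetical triviality claim.
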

\begin{proof}
We have $g=g_n+g_{n+1}+\cdots$ (subscripts denote $\operatorname{resdeg}$). Thus $Dg=0$ implies $D_{-1}g_n=0$, which gives $g_n=D_{-1} H g_n=D H g_n - (D-D_{-1})H g_n$ via \eqref{homotopyidentity}. Therefore
\be
g=DHg_n - (D-D_{-1})H g_n + g_{n+1}+\cdots
\ee
The term $(D-D_{-1})H g_n$ has $\resdeg \geq (n+1)$ because $H$ increases $\resdeg$ while $(D-D_{-1})$ does not decrease it. Thus $g'=g-DHg_n$ has $\resdeg\geq (n+1)$ and the same $\formdeg$.
\end{proof}
\begin{lemma}
If $Df=0$, $\resdeg f=0$, and $\formdeg f >0$, then $f$ is trivial in cohomology.
\end{lemma}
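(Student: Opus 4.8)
The plan is to trivialise $f$ using the homotopy operator $H$ of \eqref{homotopydef} and then reduce to the case already handled in Lemma \ref{cohomologylemma1}. First I would record what the hypothesis $\resdeg f = 0$ buys us: such an $f$ involves no $\varepsilon^a$ at all, so $f \in \Omega(\cM) = \cin(T[1]\cM)$ is a differential form of positive form degree. Since $D_{-1}$ acts through $\pd/\pd\varepsilon^b$ (equivalently, it lowers resolution degree by one), it annihilates $f$, so $D_{-1}f = 0$. In fact the only surviving contribution to $Df$ is $\dr f$, so the condition $Df=0$ amounts to $\dr f = 0$; but the single fact I really use below is $D_{-1}f = 0$.

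The key step is then to feed $f$ into the homotopy. Because $\deg_{\mathrm{HT}} f = \resdeg f + \formdeg f = \formdeg f > 0$, the homotopy identity \eqref{homotopyidentity} reads $(H D_{-1} + D_{-1} H)f = f$, and using $D_{-1}f = 0$ this collapses to $D_{-1}(Hf) = f$. Writing $D_{-1} = D - (D_0 + D_1 + \cdots)$ gives
\[
f = D(Hf) - (D_0 + D_1 + \cdots)(Hf).
\]
Since $H$ raises resolution degree by one, $Hf$ has $\resdeg = 1$, and none of $D_0, D_1, \dots$ lowers resolution degree; hence the remainder $(D_0 + D_1 + \cdots)(Hf)$ has $\resdeg \geq 1$. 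Thus $f$ is cohomologous, via the $D$-exact term $D(Hf)$, to $g' = f - D(Hf)$ of the same form degree $k = \formdeg f$ with $\resdeg g' \geq 1$. Invoking Lemma \ref{cohomologylemma1} with $n=1$ (a $D$-closed form cohomologous to something of strictly positive resolution degree is trivial) then shows $f$ is trivial in cohomology.

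The one genuinely delicate point — the \emph{main obstacle} — is the applicability of $H$ to a function $f$ that sits at resolution degree zero, rather than in the interior region where $H$ is normally used. This works precisely because $\deg_{\mathrm{HT}} f \neq 0$: the scaling integral in \eqref{homotopydef} gives $h^\star f = t^{\formdeg f} f$, which integrates against $\dr t/t$ convergently since $\formdeg f > 0$, producing $Hf = \zeta f/\formdeg f$; and $D_{-1}f = 0$ is what turns the homotopy identity into an honest $D_{-1}$-primitive for $f$. Once these are in place, the remainder of the argument is just bookkeeping in resolution degree.
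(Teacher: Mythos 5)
Your proof is correct and takes essentially the same route as the paper's: from $D_{-1}f=0$ and the homotopy identity \eqref{homotopyidentity} you write $f = D(Hf) - (D-D_{-1})(Hf)$, observe the remainder has $\resdeg\geq 1$, and conclude by Lemma \ref{cohomologylemma1}. Your additional verification that $H$ is well-defined on resolution-degree-zero input (the integral converges because $\formdeg f>0$ forces $\deg_{\mathrm{HT}}f\neq 0$, giving $Hf=\zeta f/\formdeg f$) is a detail the paper leaves implicit, and is a sound addition.
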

\begin{proof}
$\resdeg f=0$ implies $D_{-1}f=0$. We then have $f=DHf- (D-D_{-1}) Hf$ via \eqref{homotopyidentity}. The last term has $\resdeg\geq 1$, so $f$ is trivial by Lemma \ref{cohomologylemma1}.
\end{proof}
\begin{lemma}
\label{cohomologylemma3}
For all $g\in\cin(\cM)$ there exists a \emph{unique} $f\in \cin(T[1]\cM\oplus \tfo\cM)$ with $Df=0$, $\formdeg f=0$, and $\sfs_0^\star f=g$ (for $\sfs_0$ the zero section of $T[1]\cM\oplus \tfo\cM$).
\end{lemma}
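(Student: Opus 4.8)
The plan is to build $f$ recursively in resolution degree and to extract uniqueness from the same recursion. Write $f=\sum_{\ell\geq 0}f_\ell$ with $\resdeg f_\ell=\ell$ and $\formdeg f_\ell=0$. Pulling back along the zero section of $T[1]\cM\oplus\tfo\cM$ sets $\dr z^a=\varepsilon^a=0$, so it reads off the $\resdeg=0$ part; hence the requirement $\sfs_0^\star f=g$ is equivalent to $f_0=g$, which fixes the lowest piece. It then remains to determine $f_\ell$ for $\ell>0$ from $Df=0$ and $\formdeg f=0$.

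Expanding $D=D_{-1}+D_0+D_1+\cdots$ (where $D_n$ raises $\resdeg$ by $n$) and collecting $Df=0$ in resolution degree $\ell$ produces, for each $\ell\geq 0$, an equation
\[
D_{-1}f_{\ell+1}=W_\ell, \qquad W_\ell:=-\sum_{n=0}^{\ell}D_n f_{\ell-n},
\]
whose right-hand side $W_\ell$ (of $\formdeg 1$, $\resdeg\ell$) involves only the already-determined $f_0,\dots,f_\ell$. On a $\formdeg$-$0$ function $D_{-1}$ acts as $\dr z^a\,{C_{-1}}_a^b\,\pd/\pd\varepsilon^b$ with ${C_{-1}}$ invertible and the $\dr z^a$ independent; since $\pd/\pd\varepsilon^b$ has trivial kernel on functions of positive resolution degree, $D_{-1}f_{\ell+1}=W_\ell$ admits \emph{at most one} $\formdeg$-$0$ solution. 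This already yields uniqueness: the difference of two candidate $f,f'$ has vanishing $\resdeg$-$0$ part, and the equations above force its pieces to vanish order by order.

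For existence one must actually solve $D_{-1}f_{\ell+1}=W_\ell$ at each step, and here acyclicity of $D_{-1}$ enters. The key point to verify is that $W_\ell$ is $D_{-1}$-closed; this follows from flatness $D^2=0$ by extracting the $\resdeg$-$(\ell-1)$ component of $D^2(f_0+\cdots+f_\ell)=0$, where, using the inductive hypothesis that the lower-order equations already hold, the only surviving contribution is $D_{-1}W_\ell$ (the $\ell=0$ case being automatic, as $D_{-1}W_0$ has negative resolution degree). Since $\deg_{\mathrm{HT}}W_\ell=\ell+1>0$, Proposition \ref{acyclicityOfDelta:prop} and the explicit homotopy $H$ of \eqref{homotopydef} then furnish a solution: setting $f_{\ell+1}:=HW_\ell$ gives an element of the correct bidegree ($\formdeg 0$, $\resdeg\ell+1$), and the homotopy identity \eqref{homotopyidentity} together with $D_{-1}W_\ell=0$ yields $D_{-1}(HW_\ell)=W_\ell$. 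Iterating assembles the desired globally-defined $f=\sum_\ell f_\ell$, the global nature being inherited from that of $H$.

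I expect the main obstacle to be the closedness check $D_{-1}W_\ell=0$: it is the single place where global flatness $D^2=0$ must be combined correctly with the resolution-degree bookkeeping, and it is what makes the recursion close and the appeal to acyclicity legitimate. The remaining ingredients --- identifying $f_0$, invertibility of ${C_{-1}}$, and the degree-counting that places $HW_\ell$ in the right bidegree --- are routine. As an alternative, uniqueness can be phrased cohomologically: any $\formdeg$-$0$, $D$-closed $\phi$ with $\sfs_0^\star\phi=0$ has $\resdeg\geq 1$ and so is $D$-exact by Lemma \ref{cohomologylemma1}, but a $\formdeg$-$0$ coboundary would require a $\formdeg$-$(-1)$ primitive and must therefore vanish.
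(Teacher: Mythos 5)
Your proof is correct and takes essentially the same route as the paper's: both construct $f$ recursively in resolution degree via the homotopy, setting $f_{\ell+1}=HW_\ell$ after checking that the obstruction $W_\ell$ is $D_{-1}$-closed as a consequence of $D^2=0$ --- a check the paper delegates to Lemma \ref{cohomologylemma1}, whose proof contains exactly this step, while you inline it. Your primary uniqueness argument (injectivity of $D_{-1}$ on $\formdeg$-$0$ functions of positive $\resdeg$) is a slightly more direct variant, and the ``alternative'' cohomological phrasing you give at the end is precisely the paper's own uniqueness argument.
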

\begin{proof}
\emph{Uniqueness:} if $f,f'$ are two such, $f-f'$ has $\sfs_0^\star(f-f')=0\implies \resdeg (f-f')\geq 1$. By Lemma \ref{cohomologylemma1} this difference is cohomologous to a trivial 0-form; this is zero.

\emph{Existence:} We expand the putative $f$ in resolution degree as $f=f_0+f_1+\dots$, where $f_0$ is the pullback of $g$ by the bundle projection. $D^2=0$ implies $D_0 D_{-1}+D_{-1} D_0=0$, which implies $D_{-1}D_0 f_0=0$. If we set $f_1=- H D_0 f_0$ we thus find $\resdeg D(f_0+f_1)\geq 1$. We now employ Lemma \ref{cohomologylemma1} to inductively show there exist $f_1,f_2,\dots f_\ell$ so that
$   \resdeg D(f_0+f_1+\dots f_\ell)\geq \ell
   $
   for all $\ell$.
\end{proof}
Uniqueness implies the map $g\to f$ of Lemma \ref{cohomologylemma3} is multiplicative. By Proposition \ref{uniqueness:prop} this map is the pullback $\sfe^\star$ defined in \eqref{efact} via $\fexp$. It is related to the map $\tau$ of Liao-Sti\'{e}non defined via their \cite[Proposition 6.2]{Liao:2015lta} which is essentially our Lemma \ref{cohomologylemma3}.

The three lemmas together calculate the cohomology $H^\bullet(D)$:
\begin{theorem}
If $D$ is a Grothendieck connection, then $H^{k>0}(D)=0$, $H^0(D)\cong \cin(\cM)$, where $k$ is form degree. The map $\sfe^\star: \cin(\cM)\to \cin(T[1]\cM\oplus \tfo \cM)$ is a quasi-isomorphism.
\end{theorem}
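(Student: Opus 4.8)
The plan is to assemble the three lemmas just proved, after first recording two structural observations. Since $D$ has form degree $1$ it preserves the $\formdeg$ grading, so the complex $(\cin(T[1]\cM\oplus\tfo\cM),D)$ decomposes as a sum over $\formdeg=k$ and its cohomology splits as $\bigoplus_k H^k(D)$. On the other side, $\cin(\cM)$ sits entirely in $\formdeg 0$ and carries the zero differential, while $\sfe^\star$ takes values in $\formdeg 0$ and satisfies $D\sfe^\star=0$ by \eqref{Dkillsecondition}; hence $\sfe^\star$ is a chain map from $(\cin(\cM),0)$ into our complex, and the quasi-isomorphism claim is equivalent to computing the $H^k(D)$ and checking that $\sfe^\star$ realises the isomorphism in degree zero.

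For the degree-zero part I would argue as follows. Because $D$ raises form degree by one and there are no negative form degrees, no nonzero element of $\formdeg 0$ can be $D$-exact, so $H^0(D)$ is precisely the space of $D$-closed functions of $\formdeg 0$. Lemma \ref{cohomologylemma3} identifies this space with $\cin(\cM)$: the assignment $g\mapsto f$ it supplies is a two-sided inverse to $\sfs_0^\star$ on this space, since uniqueness forces every closed $\formdeg 0$ function $f$ to equal the one built from $g=\sfs_0^\star f$. By Proposition \ref{uniqueness:prop} this assignment is exactly $\sfe^\star$, so $\sfe^\star:\cin(\cM)\to H^0(D)$ is an isomorphism, giving both $H^0(D)\cong\cin(\cM)$ and the $\formdeg 0$ part of the quasi-isomorphism.

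For $H^{k>0}(D)$ I would take a $D$-closed $f$ with $\formdeg f=k>0$ and show it is $D$-exact by first stripping off its resolution-degree-zero part and then invoking Lemma \ref{cohomologylemma1}. Writing $f=f_0+f_1+\cdots$ in $\resdeg$, the lowest piece obeys $D_{-1}f_0=0$ automatically, since $D_{-1}$ lowers $\resdeg$ by one. As $\deg_{\mathrm{HT}}f_0=k>0$, the homotopy identity \eqref{homotopyidentity} yields $f_0=D_{-1}Hf_0$, so $f-D(Hf_0)$ is $D$-closed, cohomologous to $f$, of the same form degree, and of $\resdeg\geq 1$ (here $D-D_{-1}$ does not lower, and $H$ strictly raises, resolution degree). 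Lemma \ref{cohomologylemma1} then applies with $n=1$ and shows this representative, hence $f$, is trivial in cohomology. This gives $H^{k>0}(D)=0$ and completes the proof that $\sfe^\star$ is a quasi-isomorphism.

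The step I expect to be the main obstacle is this last $H^{k>0}$ argument, and in particular the clause ``cohomologous to arbitrarily high $\resdeg$, hence trivial'' that is already packaged inside Lemma \ref{cohomologylemma1}: one must be sure that iterating the $\resdeg$-raising homotopy produces a genuinely convergent primitive, which rests on $\cin(\tfo\cM)$ being complete for the resolution-degree filtration (it is a formal power series ring in the $\varepsilon^a$). Everything else is bookkeeping in the $\resdeg$/$\formdeg$ bigrading, together with the observation that $D_{-1}f_0$ vanishes for degree reasons.
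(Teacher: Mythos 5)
Your proposal is correct and takes essentially the same route as the paper: $H^0(D)\cong\cin(\cM)$ via Lemma \ref{cohomologylemma3} (with the map $g\mapsto f$ identified as $\sfe^\star$ through Proposition \ref{uniqueness:prop}, and the observation that nothing of $\formdeg$ $0$ can be $D$-exact), and $H^{k>0}(D)=0$ via the homotopy identity \eqref{homotopyidentity} combined with Lemma \ref{cohomologylemma1}. If anything your positive-degree step is slightly more careful than the paper's, since its second lemma is stated only for closed $f$ with $\resdeg f=0$ exactly, and your bridging move --- stripping off the resolution-degree-zero piece $f_0$ of a general closed $f$ using $D_{-1}f_0=0$ and $H$, then invoking Lemma \ref{cohomologylemma1} with $n=1$ --- is precisely what is needed to make the lemmas combine; your remark that the ``hence trivial'' clause rests on completeness of $\cin(\tfo\cM)$ in the resolution-degree filtration is likewise the right justification for a point the paper leaves implicit.
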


\subsection{From the Grothendieck connection to the formal exponential map}
For this subsection we do not assume $D$ is a Grothendieck connection; we assume it is just a vector field on $T[1]\cM\oplus \tfo \cM$ that lifts the de Rham differential (formula \eqref{DvsdeRhamcondition}, i.e.~$D\pi_{T[1]}^\star=\pi_{T[1]}^\star \dr$) and has $\formdeg D=1$. $D_{-1}$ will again be the vector field with $\resdeg 1$ in the expansion of $D$ in resolution degree.

We can recover the formal exponential map from $D$ alone:
\begin{theorem}
If $D$ as above has $D^2=0$, and if $D_{-1}$ is acyclic in the sense of Proposition \ref{acyclicityOfDelta:prop}, then there exists a unique formal exponential map $\fexp :\tfo \cM\to \cM$ for which $D$ is the Grothendieck connection in the sense of Proposition \ref{uniqueness:prop}.
\end{theorem}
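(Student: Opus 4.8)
The strategy is to invert the construction of Section~\ref{sec:cohomology}: from $D$ alone I will build the pullback $\sfe^\star$ of the (not yet available) map $\sfe=\fexp\circ\pi_{\tfo}$, extract $\fexp$ from it, and then certify the pair $(\fexp,D)$ using Proposition~\ref{uniqueness:prop}. For each $g\in\cin(\cM)$ I would reconstruct $f=\sfe^\star(g)\in\cin(T[1]\cM\oplus\tfo\cM)$ characterised exactly as in Lemma~\ref{cohomologylemma3} by $Df=0$, $\formdeg f=0$, and $\sfs_0^\star f=g$. The subtlety is that the proof of that lemma used the explicit homotopy $H$ built from $\zeta$ --- hence from a formal exponential map we do not yet have --- so I would instead run the resolution-degree recursion $f=f_0+f_1+\cdots$ (with $f_0$ the pullback of $g$ to $T[1]\cM\oplus\tfo\cM$) invoking only the assumed acyclicity of $D_{-1}$.

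This recursion is the technical core. Assume $f_0,\dots,f_\ell$ have been chosen so that $\resdeg D(f_0+\cdots+f_\ell)\ge\ell$, and set $R_\ell\equiv\bigl(D(f_0+\cdots+f_\ell)\bigr)_\ell$, which carries $\formdeg 1$ and $\deg_{\mathrm{HT}}=\ell+1$. Expanding $D^2(f_0+\cdots+f_\ell)=0$ in resolution degree, its resolution-degree $(\ell-1)$ part reads $D_{-1}R_\ell=0$; thus $R_\ell$ is $D_{-1}$-closed in positive $\deg_{\mathrm{HT}}$. Acyclicity of $D_{-1}$ then yields $f_{\ell+1}$ with $\formdeg 0$, $\resdeg\ell+1$, and $D_{-1}f_{\ell+1}=-R_\ell$, which promotes $\resdeg D(f_0+\cdots+f_{\ell+1})\ge\ell+1$. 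Moreover $f_{\ell+1}$ is unique, because the kernel of $D_{-1}$ on $\formdeg 0$ functions of positive resolution degree vanishes --- this left-exactness being exactly the $\deg_{\mathrm{HT}}>0$ part of the acyclicity hypothesis. The same injectivity shows $f$ itself is unique: a $D$-closed $\formdeg 0$ function with $\sfs_0^\star=0$ has positive resolution degree and its lowest component is $D_{-1}$-closed at $\formdeg 0$, hence zero, so it vanishes term by term. This reproduces Lemma~\ref{cohomologylemma3} from the present, weaker hypotheses. Since $D$, $f_0$ and the recursion all preserve the $\bbZ$-degree, each $\sfe^\star g$ has $\deg=\deg g$.

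I would then check that $\sfe^\star$ is a homomorphism of graded algebras, precisely as in the remark following Lemma~\ref{cohomologylemma3}: for linearity and multiplicativity both sides of each identity are $D$-closed $\formdeg 0$ functions with equal $\sfs_0^\star$-image, so they agree by the uniqueness just established. Appealing to the reconstruction of graded-manifold morphisms from their pullback homomorphisms \cite{vysoky2021global} --- the coordinate expressions $\sfe^\star z^a=z^a+\cdots$ (higher resolution degree) being manifestly smooth-coefficient power series in $\varepsilon$ --- this defines a map $\sfe:T[1]\cM\oplus\tfo\cM\to\cM$. Because every $\sfe^\star g$ has $\formdeg 0$ it is independent of the $\dr z^a$, so $\sfe^\star$ takes values in the image of $\pi_{\tfo}^\star$ and $\sfe$ factors as $\fexp\circ\pi_{\tfo}$ for a unique $\fexp:\tfo\cM\to\cM$.

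Finally I would feed $(\sfe,D)$ into Proposition~\ref{uniqueness:prop}. Its hypotheses hold: $\sfe$ factors through $\pi_{\tfo}$ as required; $D$ has $\formdeg 1$ and satisfies \eqref{DvsdeRhamcondition} by assumption; condition \eqref{sfeassumption} is $\sfs_0^\star\sfe^\star g=g$, true by construction; and condition \eqref{Dkillsecondition}, $D\sfe^\star=0$, holds since $Df=0$ for all $f=\sfe^\star g$. Proposition~\ref{uniqueness:prop} then certifies that $\fexp$ is a formal exponential map and that $D$ is its Grothendieck connection, and its uniqueness clause (reinforced by the uniqueness of $\sfe^\star$ above) makes $\fexp$ the unique such map. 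I expect the genuine obstacle to lie entirely in the recursion of the second paragraph --- keeping the $(\formdeg,\resdeg)$ bigrading straight so that acyclicity is seen to supply both existence (exactness for $\deg_{\mathrm{HT}}>0$) and uniqueness (injectivity at $\formdeg 0$) --- while the verification in the last step is routine bookkeeping.
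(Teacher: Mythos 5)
Your proposal is correct and follows the same overall strategy as the paper's (sketched) proof: reconstruct the pullback $\sfe^\star$ via the characterisation of Lemma \ref{cohomologylemma3} ($Df=0$, $\formdeg f=0$, $\sfs_0^\star f=g$), observe that $\formdeg\,\sfe^\star g=0$ forces the factorisation $\sfe=\fexp\circ\pi_{\tfo}$ by degree counting, and hand the pair to Proposition \ref{uniqueness:prop}. You deviate in two technical respects, both defensible and arguably sharper. First, the paper reruns the arguments of Section \ref{sec:cohomology} verbatim, including the homotopy $H$ built from $\zeta$ --- but $\zeta$ was constructed there from the coefficients $e^a_b$ of a formal exponential map, which is precisely what is not yet available; to follow the paper literally one must first extract invertibility of ${C_{-1}}^b_a$ from the acyclicity hypothesis (its $\deg_{\mathrm{HT}}=1$, $\formdeg=0$ part does this, since nothing has $\formdeg=-1$) and then rebuild $\zeta$ and $H$. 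Your recursion sidesteps this by using acyclicity abstractly: exactness in $\deg_{\mathrm{HT}}>0$ supplies $f_{\ell+1}$ with $D_{-1}f_{\ell+1}=-R_\ell$, and the vanishing of $\ker D_{-1}$ on $\formdeg$-zero functions of positive resolution degree (again because there is no $\formdeg=-1$) gives uniqueness, replacing the paper's appeal to the homotopy identity and Lemma \ref{cohomologylemma1}; your bigrading bookkeeping ($R_\ell$ closed by the $\resdeg=\ell-1$ component of $D^2=0$) is sound. Second, where you build $\sfe^\star$ as a single global algebra morphism and invoke the correspondence between graded-ring morphisms and maps of graded manifolds, the paper instead localises the whole construction to each open $U\subseteq M$ (this is exactly why it computed the cohomology by hand rather than by abstract homological perturbation) and checks that the collection $(\sfe|_U)^\star$ assembles into a morphism of graded locally ringed spaces. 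Your global route is consistent with the paper's own convention identifying $\cin(\cN,\cM)$ with ring morphisms, though the local route is the more conservative one in this broadened category with degree-zero ``graded'' coordinates, where affineness is the one point deserving the citation you give it. The endgame is identical: your verification of \eqref{sfeassumption}--\eqref{Dkillsecondition} and the final appeal to Proposition \ref{uniqueness:prop} match the paper's.
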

\begin{proof}(Sketch.)
The arguments of the previous two subsections are valid if we simultaneously replace $\cM$ by its restriction $\cM|_U$ on some open set $U\subseteq M$ and all vector fields by their restrictions to $U$. The deformation retract \eqref{deformationretractdef} fixes $\cM$ so in particular it fixes the body $M$, so it and the homotopy $H$ of \eqref{homotopydef} are also well-defined on the restriction. Lemma \ref{cohomologylemma3} then defines a map $(\sfe|_U)^\star: \cin(\cM|_U)\to \cin\big( (T[1]\oplus \tfo)\cM|_U\big)$. It can be checked that the collection of such for each $U$ defines a morphism of graded locally ringed spaces in the sense of \cite[Definition 2.24]{vysoky2021global} which is to say a morphism of graded manifolds $\sfe:(T[1]\oplus \tfo)\cM \to \cM$. This factors as $\sfe=\fexp\circ \pi_{\tfo}$ for some map $\fexp:\tfo\cM\to \cM$ by degree-counting reasons, and so Proposition \ref{uniqueness:prop} then implies $\fexp$ is a  formal exponential map.
\end{proof}

\section{Examples and constructions}
\label{sec:examples}
\subsection{The canonical exponential on a vector space} Assume $\cM=$ is a \emph{linear} graded manifold $\mathcal V$ i.e.~a vector space. In that case it can be covered with a single global coordinate chart, and we can define
\be
\fexp^\star z^a\equiv z^a+ \varepsilon^a\,.
\ee
This arises from the exponential map $\exp:T\cM\to \cM$ defined by the obvious expression $\exp^\star z^a=z^a + v^a$. This is sensible since $T\cM\cong {\mathcal V}\oplus {\mathcal V}$, and it is globally defined. Any formal exponential map on a linear graded manifold can be transformed into this canonical one using a ``formal diffeomorphism'' acting only on $\{\varepsilon^a\}$, as outlined in the proof of Proposition \ref{flatnessprop}. (This does not preserve resolution degree.)

\subsection{Formal exponentials from generalised exponentials}
\label{sec:fexpfromgexp}
Using Proposition \ref{preFexpfromgenexpmap:prop} we can construct formal exponential maps from the vertical jets of certain maps
\be
\exp:T\cM\to \cM\,,
\ee
locally defined on some open $U \subseteq TM$ containing the zero section of $TM$. (More precisely, from $J^n(\exp^\star f)$ where $f$ ranges over $\cin(\cM)$.) When the properties analogous to (\ref{eq:fexp:def:propertyA},~\ref{eq:fexp:def:propertyB}) are satisfied, we say $\exp$ is an \emph{exponential map}. \emph{Proper} exponential maps in the non-graded case have been called ``generalised exponential maps'', see e.g.~\cite{Cattaneo:2001vu}. Usually --- e.g.~for the Riemannian exponential map --- condition \eqref{eq:fexp:def:proper} is satisfied so $\exp$ is a generalised exponential map and its associated $\fexp$ is a proper formal exponential map.

\subsection{The formal exponential associated to a connection}\label{fexpconnsubsection} Here we make contact with some of the results of \cite{Liao:2015lta}. We start with a connection on the tangent bundle $T\cM$. This induces a connection on $T^\star \cM$. We can be explicit in local coordinates:\footnote{Here $\nabla_a\equiv\nabla_{\pd_a}$, $\nabla_X$ for vector field $X$ is $\bbR$-linear, degree-preserving, and satisfies the obvious Leinbiz-esque rule, and we demand the duality relation involving the vector/1-form pairing: $Y(X^a\alpha_a)=(\nabla_Y X)^a\alpha_a+(-1)^{XY}X^a (\nabla_Y\alpha)_a\,.$}
\be
\nabla_c \pd_a\equiv A_{ca}^{b} \pd_b\implies \nabla_a \varepsilon^c=-\varepsilon^b (-1)^{ab} A_{ab}^{c} \,.
\ee
The connection is torsion-free\footnote{The torsion is defined as the following expression, $T(X,Y)\equiv \nabla_X Y-(-1)^{XY}\nabla_Y X -[X,Y]$ for $[\bullet,\bullet]$ the graded commutator. This is manifestly graded-antisymmetric and $\cin(\cM)$-linear in the sense $T(X,fY)=(-1)^{Xf}fT(X,Y)$.} when
\be
\label{torsionfree}A_{ab}^c=(-1)^{ab}A_{ba}^c\,,
\ee
which we will assume henceforth.

We use the connection to define a vector field $d_\nabla$ on $T[1]\cM\oplus \tfo\cM$ via the Leibniz rule, such that it acts as the de Rham differential on $\cin(T[1]\cM)$, and as $\nabla$ along $\cin(\tfo\cM)$. Explicitly:
\be
d_\nabla (z^a)\equiv\dr z^a\,,\qquad d_\nabla (\dr z^a)=0\,,\qquad d_\nabla(\varepsilon^a)=-\dr z^b\varepsilon^c A_{cb}^{a} \,,
\ee
where $A_{cb}^a$ depends on $\{z^a\}$.

We can now construct a proper formal exponential map by completing $d_\nabla$ into a vector field $D$ on $T[1]\cM\oplus \tfo\cM$ that satisfies the conditions of a Grothendieck connection. In terms of the expansion of $D$ in resolution degree, we identify $d_\nabla\equiv D_0$ and we set
\be
D_{-1}\equiv-\dr z^a \frac{\pd}{\pd\varepsilon^a}
\ee
as appropriate for a proper formal exponential map. Then the first two $D^2=0$ conditions (see \eqref{expandedflatness}) are satisfied; in particular
\be
[D_0,D_{-1}]=0\iff 0=[D_0,D_{-1}]\varepsilon^a=(-1)^b \dr z^b \dr z^c A^a_{cb}
\ee
which indeed vanishes because the connection $\nabla$ is torsion-free \eqref{torsionfree}. The third $D^2=0$ condition \eqref{expandedflatness} (which is $(D_{0})^2=-[D_{-1},D_{+1}]$) is a consequence of the ``algebraic Bianchi identity'' $R^a{}_{[bcd]}=0$ of the curvature tensor (where $[\cdots]$ is graded antisymmetrisation). This can be understood as follows: we have $[D_{-1},(D_0)^2]=0$ due to the algebraic Bianchi identity. Using the `counting' vector field $\epsilon_\mathrm{HT}$ of \eqref{epsilonHT}, we calculate
\be
[\epsilon_\mathrm{HT}, (D_0)^2]=2 (D_0)^2=[\zeta D_{-1}+D_{-1}\zeta, (D_0)^2]=[D_{-1},[\zeta,(D_0)^2]]\,,
\ee
whence we find that a suitable $D_{+1}$ is $-\frac{1}{2}[\zeta,(D_0)^2]$. At this point, we find that the assumptions of the homological perturbation result of the textbook \cite{henneaux1992quantization} of Henneaux and Teitelboim are satisfied\footnote{This is the same argument that is used to establish the existence of solutions to the BV-BRST classical master equation in field theory \cite{Fisch:1989rp,Fisch:1989rp}.}, so we can complete $D_{-1},D_{0},D_{+1}$ to a nilpotent
\be
D\equiv D_{-1}+D_{0}+D_{+1}+\dots
\ee
where the elided terms have $\operatorname{resdeg}\geq 2$.  We recall and apply the theorem in appendix \ref{app:HT}. We have thus constructed a proper formal exponential map via its associated Grothendieck connection.

We can also use the argument in this subsection to deduce that given \emph{any} proper formal exponential map, we have an associated connection on $T\cM$, that is torsion-free as a consequence of $D^2=0$. Given the expansion \eqref{Dexpansion} $D=D_{-1}+D_0+D_{+1}+\dots$ of the Grothendieck connection of a proper formal exponential map in terms of resolution degree, we recognise $D_0$ as a \emph{Quillen superconnection} associated to a connection on $T^\star \cM$ \cite{quillen1985superconnections}: $D_0$ indeed satisfies the Leibniz rule
\be
\begin{split}
D_0(\alpha f)= (\dr \alpha) f + (-1)^{\deg \alpha} \alpha D_0 f\,,\\ \alpha \in \cin(T[1]\cM)\,,\quad f\in \cin(T[1]\cM\oplus \tfo \cM)\,,\quad \resdeg f=1\,.
\end{split}
\ee
characterising such, if we identify differential forms with functions on $T[1]\cM$ as usual, and differential forms with values in $T^\star \cM$ with functions of resolution degree 1 on $T[1]\cM\oplus \tfo \cM$. This connection on $T^\star\cM$ then induces a unique connection on $T\cM$ as usual. We can summarise this as
\begin{theorem}
Every proper formal exponential map on a graded manifold $\cM$ gives rise to a unique torsion-free connection on the tangent bundle $T\cM$, and vice versa.
\end{theorem}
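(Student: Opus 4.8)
The plan is to read the correspondence off the Grothendieck connection $D$, which already interpolates between the two sides. In the forward direction I would take a proper formal exponential map $\fexp$, pass to its unique Grothendieck connection $D$ by Theorem \ref{Dfromfexp:thm}, and isolate the resolution-degree-zero part $D_0$ of the expansion $D=D_{-1}+D_0+D_{+1}+\dots$. The key point is that $D_0$ obeys the Quillen superconnection Leibniz rule on functions of resolution degree $1$; since these are identified with sections of $T^\star\cM$ (the $\varepsilon^a$ transform as the $\dr z^a$, by \eqref{eq:transitionfunctions}), this is exactly the data of a connection on $T^\star\cM$, dualising to a unique $\nabla$ on $T\cM$. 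As resolution degree is preserved by the transition functions, $D_0$ and hence $\nabla$ are globally well defined, and as $D$ is uniquely fixed by $\fexp$, so is $\nabla$; this produces the assignment $\fexp\mapsto\nabla$ together with its uniqueness.

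Torsion-freeness then follows from flatness: expanding $D^2=0$ in resolution degree, the $\resdeg=-1$ equation in \eqref{expandedflatness} is $[D_{-1},D_0]=0$, and for a proper map, where $D_{-1}=-\dr z^a\,\pd/\pd\varepsilon^a$, the bracket $[D_{-1},D_0]\varepsilon^a$ is the graded-antisymmetric part of the connection coefficients, so its vanishing is precisely condition \eqref{torsionfree}. For the converse I would invoke the construction given above: from a torsion-free $\nabla$ one sets $D_0=d_\nabla$ and $D_{-1}=-\dr z^a\,\pd/\pd\varepsilon^a$, solves $[D_{-1},D_{+1}]=-(D_0)^2$ by $D_{+1}=-\frac{1}{2}[\zeta,(D_0)^2]$, and completes to a nilpotent $D$ using the homological perturbation theorem of appendix \ref{app:HT}; reconstructing $\fexp$ from $D$ yields a proper formal exponential map. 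Feeding this back into the forward map returns $D_0=d_\nabla$, hence $\nabla$, so the forward assignment is surjective and the two constructions agree — this furnishes the ``vice versa''.

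The step I expect to demand the most care is the exact sense in which this is a bijection, because $\nabla$ controls only the two-jet of $\fexp$. By the acyclicity of $D_{-1}$ (Proposition \ref{acyclicityOfDelta:prop}) the higher flatness equations determine each $D_{+n}$ ($n\geq 1$) only up to a $D_{-1}$-exact term, so a priori several proper maps can share one connection; this ambiguity is exactly what the canonical homotopy $\zeta=\varepsilon^a\,\pd/\pd\dr z^a$ — which is forced in the proper case — removes, making the completion, and hence the map $\nabla\mapsto\fexp$, canonical and a section of the forward assignment $\fexp\mapsto\nabla$. I would therefore phrase the bijection through the dictionary $D_0\leftrightarrow\nabla$, where the Quillen-superconnection assignment is a clean bijection onto torsion-free connections on $T\cM$, and verify both the superconnection identification and the inverse property by local computations in the adapted chart.
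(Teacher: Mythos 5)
Your proposal follows the paper's proof essentially verbatim: the forward direction reads the connection off the resolution-degree-zero part $D_0$ of the Grothendieck connection via the Quillen-superconnection Leibniz rule (a connection on $T^\star\cM$, dualised to $T\cM$), with torsion-freeness coming from the $[D_{-1},D_0]=0$ component of $D^2=0$ in \eqref{expandedflatness}, and the converse is exactly the paper's construction $D_0=d_\nabla$, $D_{+1}=-\tfrac{1}{2}[\zeta,(D_0)^2]$, completed to a nilpotent $D$ by the Henneaux--Teitelboim homological perturbation theorem of appendix \ref{app:HT} and then converted back into $\fexp$. Your closing discussion of bijectivity goes beyond the paper, which only establishes existence in the ``vice versa'' direction, and your observation that distinct proper maps can share one connection (so the honest bijection lives at the level $D_0\leftrightarrow\nabla$, with $\nabla\mapsto\fexp$ at best a canonical section) is a correct and worthwhile refinement --- though note that the canonicity of the $\zeta$-normalised completion is asserted rather than proved, and is not needed for the theorem as stated.
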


This has an interesting implication in the context of the construction of formal exponential maps from \emph{generalised exponential maps} $\exp:T\cM\to \cM$ (see subsection \ref{sec:fexpfromgexp} for the definition): for every generalised exponential map $\exp$ there exists a unique torsion-free connection on $T\cM$ whose (proper) formal exponential map $\fexp$ reproduces the vertical jets of $\exp$. In other words we can always find a connection that reproduces the Taylor coefficients in an expansion along the fibres of $T\cM$ of such $\exp$ evaluated on $\cM\into T\cM$.

\section{Linearising a QP-manifold at a point}
\subsection{Honest points}
\label{sec:QP}


For this section we assume the given graded manifold $\cM$ is also equipped with 
\begin{itemize}
    \item a degree 1 vector field $Q$ that squares to zero
    \item a degree $P$ symplectic form $\omega$
\end{itemize}
such that $(\cM,\omega,Q)$ has the structure of a \emph{QP-manifold} (also known as a differential graded symplectic manifold). It is of interest, for example in the context of the BV formalism of gauge theory, or in AKSZ topological field theories \cite{Alexandrov:1995kv}, to pass from a QP structure on $\cM$ to a QP structure on each tangent space, because the latter is linear; if the QP structure depends smoothly on the chosen point $x$, then one can in principle keep track of said dependence as the chosen point varies. (See \cite{Cattaneo:2001vu,Cattaneo:2018igh,Moshayedi:2020bkb} for applications where this is important.) We show how to construct a QP-structure on $T_x[\varepsilon]\cM$ using any formal exponential map $\fexp$ in the sense of Definition \ref{fexpdef}.

Let $x\in M$ be a point of the \emph{body} $M$ of $\cM$ (as opposed to a generalised point; we discuss those later). We have the inclusion $i_x:T_x[\varepsilon]\cM\into T[\varepsilon]\cM$ of the formal tangent space at $x$ into the formal tangent bundle (since $\tfo\cM$ is a vector bundle, this can be constructed as the pullback bundle $x^\star \tfo\cM$ along the map $x:\{\cdot\}\to \cM$). Given also a formal exponential map we can construct a map
\be
\fexp_x\equiv \fexp\circ i_x: T_x[\varepsilon]\cM\to \cM
\ee
and pullback forms to the formal tangent bundle at $x$. This produces a form $\fexp_x^\star\omega\equiv \omega_{x,\fexp}$.

Vector fields are trickier because they do not pullback. Since, however, we can invert $\fexp\circ i_x$ \emph{locally}, we can produce a vector field on each tangent space $T_x[\varepsilon]\cM$. Firstly we consider the case where 
\be
\fexp^\star z^a=z^a+\varepsilon^a
\ee
in local coordinates $\{z^a\}$ valid around the point $x$. If we denote $x^a\equiv i_x^\star z^a$ the vector of real coordinates of the point $x$ in the coordinate chart of $M$ arising from the chart of $\cM$, we can write $\fexp_x$ down explicitly along with its locally-defined inverse $(\fexp_x)^{-1}: \cM\to T_x[\varepsilon]\cM$:
\be
\fexp_x^\star z^a = x^a +\varepsilon^a\,,\qquad (\fexp_x^{-1})^\star \varepsilon^a= z^a-x^a\,.
\ee
Therefore if $Q$ is a derivation of $\cin(\cM)$, it gives rise to a derivation
\be
Q_{x,\fexp}\equiv\fexp_x^\star Q (\fexp_x^{-1})^\star
\ee
of $\cin(T_x[\varepsilon]\cM)$. For this it does not matter that $\fexp_x^{-1}$ is locally defined on $\cM$; in fact $Q$ itself only needs to be defined in some neighbourhood of $x$. The formulas for forms and vectors on the formal tangent space are compatible in the sense that contractions and Lie derivatives obviously go to contractions and Lie derivatives.

Therefore if $Q$ and $\omega$ define the Q and P-structures of a QP manifold, then their lifts to the formal tangent space at any point will again be compatible (where $L_Q$ is the Lie derivative):
\be
L_Q \omega=0\implies L_{Q_{x,\fexp}}\omega_{x,\fexp}=0\,.
\ee Moreover, $\omega_{x,\fexp}$ will define a nondegenerate symplectic form if $\omega$ does: if the coordinates $\{z^a\}$ on $\cM$ around $x$ are chosen to be Darboux, so $\omega=\frac{1}{2} \dr z^a \omega_{ab}\dr z^b$ with $\omega_{ab}$ a nondegenerate real matrix, then
\be
\omega_{x,\fexp}=\fexp_x^\star \omega=\tfrac{1}{2} \dr\varepsilon^a \omega_{ab}\dr \varepsilon^b\,.
\ee

Finally, to deal with the generic case $\fexp^\star z^a = z^a + \varepsilon^b e_b^a(z)+\dots$ we follow the strategy of the proof of Proposition \ref{flatnessprop} to replace $\fexp$ with one which takes the canonical form above. One can also show that the construction does not depend on the coordinate choices made using an argument similar to that of subsection \ref{sec:diffeo}. Since the formal tangent space is a \emph{formal pointed (graded) manifold} in the sense of \cite{Kontsevich:1997vb}, and Q-structures on such are equivalent to $L_\infty$-algebras, we find
\begin{theorem}
On a QP-manifold $(\cM,\omega,Q)$ equipped with a formal exponential map, given any point $x$ of the body $M\into \cM$ there exists an $L_\infty$-algebra structure on $T_x[\varepsilon]\cM$ equipped with an invariant non-degenerate inner product.
\end{theorem}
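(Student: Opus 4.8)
The plan is to transport the QP structure $(\omega,Q)$ from a neighbourhood of $x$ along the locally invertible map $\fexp_x=\fexp\circ i_x$ onto the formal tangent space $T_x[\varepsilon]\cM$, verify that the transported data again obeys the QP axioms, and then appeal to the standard dictionary identifying a homological vector field on a formal pointed graded manifold with an $L_\infty$-algebra structure, under which a compatible symplectic form becomes an invariant inner product.

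First I would assemble the transported structure in the canonical normal form $\fexp^\star z^a=z^a+\varepsilon^a$. The symplectic form simply pulls back to $\omega_{x,\fexp}=\fexp_x^\star\omega$; choosing Darboux coordinates $\{z^a\}$ for $\omega$ at $x$ gives $\omega_{x,\fexp}=\tfrac{1}{2}\,\dr\varepsilon^a\,\omega_{ab}\,\dr\varepsilon^b$ with the same nondegenerate constant matrix $\omega_{ab}$, so it is again symplectic of degree $P$. The homological vector field does not pull back, so I would transport it by conjugation, $Q_{x,\fexp}=\fexp_x^\star Q\,(\fexp_x^{-1})^\star$, using the local inverse $(\fexp_x^{-1})^\star\varepsilon^a=z^a-x^a$. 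Nilpotency is then automatic, since $(\fexp_x^{-1})^\star\fexp_x^\star=1$ wherever both are defined forces $Q_{x,\fexp}^2=\fexp_x^\star Q^2(\fexp_x^{-1})^\star=0$, and the compatibility $L_{Q_{x,\fexp}}\omega_{x,\fexp}=0$ follows from $L_Q\omega=0$ because pullback and conjugation intertwine contractions and Lie derivatives.

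To remove the normal-form assumption for a general $\fexp^\star z^a=z^a+\varepsilon^b e^a_b(z)+\dots$ with $e^a_b$ invertible, I would precompose with the fibrewise formal diffeomorphism $\rho$ of the proof of Proposition \ref{flatnessprop}, acting only on the $\{\varepsilon^a\}$, which brings $\fexp$ to canonical form so that the preceding steps apply verbatim. An argument parallel to subsection \ref{sec:diffeo} then shows that the resulting pair $(\omega_{x,\fexp},Q_{x,\fexp})$ on $T_x[\varepsilon]\cM$ does not depend on the chosen Darboux chart or on the choice of $\rho$. Since $T_x[\varepsilon]\cM$ is a formal pointed graded manifold in the sense of \cite{Kontsevich:1997vb}, the square-zero degree $1$ field $Q_{x,\fexp}$ endows it with an $L_\infty$-algebra structure, while the invariant symplectic form $\omega_{x,\fexp}$ becomes a cyclic (invariant, non-degenerate) inner product of degree $P$; this is exactly the asserted structure.

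The step I expect to be the main obstacle is the transport of the homological vector field together with the attendant well-definedness: because $Q$ cannot be pulled back, one must invert $\fexp_x$ locally, and one must confirm both that the conjugate $Q_{x,\fexp}$ is a genuine, globally sensible nilpotent derivation of the formal power-series ring $\cin(T_x[\varepsilon]\cM)$ despite $\fexp_x^{-1}$ being only locally defined, and that neither $Q_{x,\fexp}$ nor $\omega_{x,\fexp}$ depends on the auxiliary choices used to reach normal form. The remaining compatibility checks and the passage to the $L_\infty$ language are then essentially formal.
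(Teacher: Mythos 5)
Your proposal is correct and follows essentially the same route as the paper: pulling back $\omega$ via $\fexp_x=\fexp\circ i_x$ with Darboux coordinates for nondegeneracy, conjugating $Q$ by the locally defined inverse in the canonical form $\fexp^\star z^a=z^a+\varepsilon^a$, reducing the general case via the fibrewise diffeomorphism $\rho$ of Proposition \ref{flatnessprop} and the invariance argument of subsection \ref{sec:diffeo}, and finally invoking Kontsevich's dictionary between Q-structures on formal pointed graded manifolds and $L_\infty$-algebras. The only difference is one of emphasis: you flag the well-definedness of the conjugated $Q_{x,\fexp}$ as the main obstacle, whereas the paper disposes of it by noting that $Q$ need only be defined near $x$ and that the conjugation lands in derivations of the formal power-series ring $\cin(T_x[\varepsilon]\cM)$.
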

\noindent Since the $L_\infty$ structure is obviously smooth in $x$, we could say the formal tangent bundle is a bundle of $L_\infty$-algebras in this case.

\subsection{Remarks on generalised points}
\label{sec:generalisedpoints}
It is well-known that graded manifolds (and supermanifolds) ``do not have enough points'': if $\cM$ is a graded manifold with body $M$, we cannot recover it as the space of maps $\cin(\mathrm{point},\cM)$ from the point to $\cM$ unless $\cM=M$. (By $\cin(\cN,\cM)$ we denote (degree-preserving) morphisms of graded  rings $\cin(\cM)\to\cin(\cN)$; these are interpreted as pullbacks $\varphi^\star$ by maps $\varphi: \cN\to\cM$. We also have $\cin(\mathrm{point})=\bbR$.) We \emph{can} instead recover $\cM$ by considering \emph{generalised points}
\be
\label{eq:xgeneralisedpoint}
x: \cS\to \cM
\ee
for $\cS$ an arbitrary graded manifold (i.e.~morphisms $x^\star$ of graded rings $\cin(\cM)\to \cin(\cS)$). Then $\cM$ is identified with its \emph{functor of points} that maps an arbitrary $\cS$ to the space of maps $\cin(\cS,\cM)$. For more details we refer to the pedagogical discussion in  \cite[Section 5]{stolz2012equivariant} (in the context of supermanifolds).

We consider linearising about a generalised point. The role of the tangent space at a point ought to be played by the vector space of sections $\Gamma[x^\star \tfo\cM]$ of the pullback bundle\footnote{We can write the module of sections of the pullback bundle $\Gamma[x^\star \mathcal E]$ of any graded vector bundle $\mathcal E$ over $\mathcal M$ as the (completed/topological) tensor product $\cin(\mathcal S)\otimes_{\cin(\mathcal M)}\Gamma[\mathcal E]$, where $\cin(\mathcal S)$ is a $\cin(\mathcal M)$-module via the map $x^\star:\cin(\mathcal M)\to \cin(\mathcal S)$ (and $\Gamma[\mathcal E]$ is one by assumption). By the definition of the tensor product, if $s\in\cin(\mathcal S)$ and $e\in \Gamma[\mathcal E]$ then $$(x^\star f)\otimes e=(\mathrm{sign}) s\otimes(f\cdot e)\,.$$ The inclusion then corresponds to the obvious map of sections $\Gamma[\mathcal E]\to \cin(\mathcal S)\otimes_{\cin(\cM)}\Gamma[\mathcal E]$.} $x^\star \tfo\cM$ along the map $x$. However it is not entirely clear how to transport a QP-structure to this space of sections. One idea is to relate these pullback sections to generalised points: given the map $i_x: x^\star \tfo\cM\to \tfo\cM$ obtained by the pullback bundle construction,  we can again form $\fexp_x\equiv x^\star \tfo\cM\to \tfo\cM$ and obtain a map
\be
\Gamma[x^\star \tfo\cM] \to \cin(\cS,\cM)
\ee
that sends the section $\sfs \in \Gamma[x^\star \tfo\cM]$ to the generalised point $\fexp_x\circ\,\sfs:\cS\to \cM$. (This in particular sends the zero section to the original point $x$.) Given a non-degenerate integral on $\cS$, we can then at least transgress forms to $\cin(\cS,\cM)$ and then pullback to pullback sections. However this relies on the interpretation of sections of $x^\star \tfo\cM$ as maps $\cS\to x^\star \tfo\cM$ (as opposed to elements of $\cin(\cS)$-modules), which appears to be problematic for general $\cS$ due to the issue explained at the end of subsection \ref{subsection:tforesdeg}. For this reason we do not pursue linearisations of QP structures around generalised points further.

There is however an interesting implication that a choice of a formal exponential map has for a generalised point:
\begin{proposition}
The derivative of the map $\underline{\fexp_x}=(\fexp \circ i_x)\times (p_{x^\star \tfo}): x^\star \tfo \cM\to \cS\times \cM$ (where $p_{x^\star \tfo}:x^\star \tfo\cM\to \cS$ is the bundle projection) is invertible at the zero section $\sfs_0:\cS\to x^\star \tfo\cM$. The image of the zero section is the graph $g_x:\cS\to \cS\times \cM
$ of the generalised point $x:\cS\to\cM$.
\end{proposition}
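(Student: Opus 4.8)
The plan is to reduce both claims to a single coordinate computation of the Jacobian of $\underline{\fexp_x}$ along the zero section. First I would fix a chart of $\cM$ with coordinates $\{z^a\}$ around the body-image of $x$, a chart of $\cS$ with coordinates $\{s^i\}$, and record the generalised point as $x^\star z^a = x^a(s)\in\cin(\cS)$. In the induced chart the pullback bundle $x^\star\tfo\cM$ carries coordinates $\{s^i,\varepsilon^a\}$, and the structure maps are fixed by the defining square $p_{\tfo}\circ i_x = x\circ p_{x^\star\tfo}$ to read
\[
p_{x^\star\tfo}^\star s^i = s^i\,,\qquad i_x^\star z^a = x^a(s)\,,\qquad i_x^\star \varepsilon^a = \varepsilon^a\,.
\]
Composing with the local form $\fexp^\star z^a = z^a + \varepsilon^b e_b^a(z) + \dots$ from \eqref{fexpTaylor} and writing the target $\cS\times\cM$ coordinates as $\{s^i,z^a\}$, I obtain, since $(\fexp\circ i_x)^\star z^a = i_x^\star(\fexp^\star z^a)$, the explicit pullback
\[
\underline{\fexp_x}^\star s^i = s^i\,,\qquad \underline{\fexp_x}^\star z^a = x^a(s) + \varepsilon^b\, (x^\star e_b^a) + (\text{terms of }\resdeg\geq 2)\,.
\]

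For the second claim I would simply precompose with the zero section, $\sfs_0^\star s^i = s^i$, $\sfs_0^\star \varepsilon^a = 0$. Every term of positive resolution degree is annihilated, leaving $(\underline{\fexp_x}\circ\sfs_0)^\star s^i = s^i$ and $(\underline{\fexp_x}\circ\sfs_0)^\star z^a = x^a(s) = x^\star z^a$. This is precisely the defining data of the graph $g_x:\cS\to\cS\times\cM$ of $x$, so $\underline{\fexp_x}\circ\sfs_0 = g_x$.

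For the first claim I would differentiate the displayed pullbacks and evaluate at $\varepsilon = 0$. The $\varepsilon$-derivatives of $\underline{\fexp_x}^\star s^i$ vanish identically, so the Jacobian is block lower-triangular:
\[
\begin{pmatrix} \delta^i_j & 0 \\ \frac{\pd x^a}{\pd s^j} & x^\star e_b^a \end{pmatrix}\,,
\]
with diagonal blocks the identity and $x^\star e_b^a$. Invertibility therefore reduces to that of $x^\star e_b^a$. Since $e_b^a$ is invertible by the defining property \eqref{eq:fexp:def:propertyB}, and $x^\star$ is a homomorphism of graded rings (hence preserves the relation $e_a^b c_b^c = \delta^c_a$ defining the inverse), the matrix $x^\star e_b^a$ is invertible over $\cin(\cS)$, and the full Jacobian is invertible along the image of $\sfs_0$.

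The computation is routine; the only point requiring genuine care is the meaning of ``invertible derivative'' here. Because $x^\star\tfo\cM$ carries degree-zero formal (power-series) fibre coordinates, $\underline{\fexp_x}$ is not a map of ordinary manifolds and one cannot directly invoke an inverse function theorem; the content of the statement is precisely that the Jacobian is an invertible matrix over $\cin(\cS)$, which the block-triangular argument supplies. I would also emphasise that this is a local, infinitesimal assertion \emph{along the zero section}, so it is untouched by the obstruction to representing nonzero sections as maps flagged at the end of subsection \ref{subsection:tforesdeg}: the zero section $\sfs_0$ is always a bona fide morphism of graded manifolds, and the higher-resolution-degree terms never enter the Jacobian at $\varepsilon=0$. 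This is the sole subtlety I anticipate.
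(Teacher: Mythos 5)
Your proposal is correct and follows essentially the same route as the paper: the same local-coordinate computation of the Jacobian along the zero section, the same triangular block structure (lower versus upper is just your ordering of the coordinates $\{s^i,\varepsilon^a\}$), with invertibility of the diagonal block $x^\star e_b^a$ coming from property \eqref{eq:fexp:def:propertyB} exactly as in the paper. The only cosmetic difference is that the paper derives the graph claim invariantly from property \eqref{eq:fexp:def:propertyA} (the image of $\sfs_0$ under $i_x$ is the zero section of $\tfo\cM$), whereas you verify it in coordinates, and your closing remark correctly identifies why the section-as-map subtlety of subsection \ref{subsection:tforesdeg} does not interfere.
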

\noindent Hence, for generalised points, one ought to think of a formal exponential map as a would-be diffeomorphism from a neighbourhood of the zero section in $x^\star \tfo\cM$ to a neighbourhood of the graph of $x$.
\begin{proof}
Since the image of $\sfs_0$ under $i_x:x^\star \tfo \cM\to \tfo\cM$ is the zero section of $\tfo\cM$, property \eqref{eq:fexp:def:propertyA} of formal exponential maps immediately implies the image of $\sfs_0$ is $g_x$. For the statement about derivatives, it suffices to calculate in local coordinates, where for $\cS$ we employ the local homogeneous coordinates $\{s^i\}$. We calculate the nonvanishing derivatives as
\be
\begin{split}
\sfs_0^\star \frac{\pd}{\pd \varepsilon^a} \underline{\fexp_x^\star} z^a= x^\star e_a^b(z)\,,\qquad \sfs_0^\star \frac{\pd}{\pd s^i} \underline{\fexp_x^\star} z^a= \frac{\pd x^\star z^a}{\pd s^i}\,,\\ \sfs_0^\star \frac{\pd}{\pd s^i} \underline{\fexp_x^\star} s^j=\delta^j_i\,,\qquad \sfs_0^\star \frac{\pd}{\pd \varepsilon^a} \underline{\fexp_x^\star} s^i=0\,.
\end{split}
\ee
The matrix of the derivative is this upper triangular with invertible diagonal blocks (due to \eqref{eq:fexp:def:propertyB}), hence it is invertible.
\end{proof}

\section*{Acknowledgements}
I would like to thank Jan Vysok\'y for patiently answering very basic questions about foundational issues of graded manifolds.

I am  supported of the FWO-Vlaanderen through the project G006119N, as well as by the Vrije Universiteit Brussel through the Strategic Research Program “High-Energy Physics”. I am also supported by an FWO Senior Postdoctoral Fellowship.

\appendix
\section{On the homological perturbation construction of proper formal exponential maps from connections}
\label{app:HT}


We will employ the following version of homological perturbation, which underlies the construction of solutions to the BV-BRST classical master equation:
\begin{theorem}[8.3 \cite{henneaux1992quantization}]
Let $\mathcal A$ be a graded-commutative algebra with odd left derivations $d$ and $\delta$. We assume $\mathcal A$ has two gradings $\operatorname{resdeg}$ and $\deg_{\mathrm{HT}}$ so that
\be
\operatorname{resdeg}\delta=-1,\quad \deg_{\mathrm{HT}}\delta=0,\quad \operatorname{resdeg}d=0,\quad \deg_{\mathrm{HT}}d=1\,.
\ee
(These gradings respectively correspond to antifield number and pureghost number in the physics context.) We also assume $\delta$ is a differential ($\delta^2=0$) and that $\delta x=0$ if $\operatorname{resdeg} x=0$; for $d$ we assume (in terms of the graded commutator)
\be
\label{eq:app:d:assumptions}
[d,\delta]=0\,,\qquad d^2=-[\delta,s_1]
\ee
for some odd left derivation $s_1$ of resolution degree 1.

If furthermore $\delta$ is acyclic ($H_{k\neq 0}(\delta)=0$)
then
\begin{itemize}
   \item[(a)] There exists a differential $s=\delta+d+s_1+\cdots$\footnote{The theorem requires assumptions to render infinite sums indexed over resolution degree well-defined; in the physics context this is usually ensured by the fact antifields are anticommuting. For the case of this paper, the sums are well-defined in the sense of formal power series.
   } of degree 1 in $  \deg_{\mathrm{HT}}-\operatorname{resdeg}$;
   \item[(b)] Any such $s$ has cohomology
   \be
   H^\bullet(s)=H^\bullet\big(d| H_0(\delta)\big)\,,
   \ee
   where $H^k\big(d| H_0(\delta)\big)$ is defined via the equivalence class $\sim$ below:
   \be
   \label{cohomologywithvalues:Def}
   \begin{split}
      x\in \mathcal A\,,\qquad \operatorname{resdeg} x=0\,,\qquad \deg_{\mathrm{HT}}x=\operatorname{formdeg}x=k\,,\\ \delta x=0\,, \qquad dx=\delta y\,,\qquad
      x\sim x + d z + \delta z'\,.
   \end{split}
   \ee
\end{itemize}
\end{theorem}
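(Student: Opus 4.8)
The plan is to treat the two assertions separately, in both cases filtering by resolution degree and using that $\delta$ lowers it by one, $d$ preserves it, and the correction terms $s_{k\geq1}$ raise it; this makes the whole construction a perturbation of the resolution $(\mathcal A,\delta)$ organised order by order in $\operatorname{resdeg}$.

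For (a) I would build the odd derivation $s=\delta+d+\sum_{k\geq1}s_k$, with $\operatorname{resdeg}s_k=k$, by requiring $s^2=0$ and solving the resulting tower of equations inductively. Decomposing $s^2$ by resolution degree, the components of degree $-2,-1,0$ are $\delta^2=0$, $[\delta,d]=0$ and $d^2+[\delta,s_1]=0$, all of which hold by hypothesis. For $n\geq1$ the degree-$n$ component reads $[\delta,s_{n+1}]=-\mathcal R_n$, where $\mathcal R_n=[d,s_n]+\sum_{j+k=n,\,j,k\geq1}s_j s_k$ is an even derivation of resolution degree $n$ assembled from $d,s_1,\dots,s_n$, hence already known. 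Thus at each stage I must solve a single equation of the form $[\delta,s_{n+1}]=-\mathcal R_n$ in the space of derivations.

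Two homological inputs drive the induction. First, the obstruction $\mathcal R_n$ is automatically $[\delta,-]$-closed: this follows from the trivial operator identity $[\,s^{(n)},(s^{(n)})^2]=0$ for the truncation $s^{(n)}=\delta+d+\sum_{k=1}^{n}s_k$ (any operator graded-commutes with its own square), read off in its lowest resolution degree, where it reduces to $[\delta,\mathcal R_n]=0$. Second, I must be able to solve $[\delta,s_{n+1}]=-\mathcal R_n$; the main obstacle is that $\mathcal R_n$ and $s_{n+1}$ are derivations rather than elements of $\mathcal A$, so the acyclicity of $\delta$ on $\mathcal A$ (in positive resolution degree) has to be promoted to acyclicity of the induced differential $[\delta,-]$ on the $\mathcal A$-module $\operatorname{Der}(\mathcal A)$ in positive resolution degree. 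I expect this to be the heart of the argument: for the freely generated algebras at hand $\operatorname{Der}(\mathcal A)$ is a free $\mathcal A$-module on the coordinate derivations, and a contracting homotopy for $[\delta,-]$ can be manufactured from a contracting homotopy for $\delta$ by a perturbation argument on the coordinate frame. Granting this, $[\delta,\mathcal R_n]=0$ together with $\operatorname{resdeg}\mathcal R_n>0$ yields $s_{n+1}$, and since each step strictly raises $\operatorname{resdeg}$ the sum $s=\delta+d+\sum_k s_k$ converges as a formal power series in resolution degree, establishing (a).

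For (b) I would compute $H^\bullet(s)$ by transferring the cohomology of $\delta$ along the perturbation $t=d+\sum_{k\geq1}s_k$. Acyclicity gives a contracting homotopy $\sigma$ for $\delta$ that raises resolution degree by one; since $d$ preserves and the $s_{k\geq1}$ strictly raise resolution degree, the composite $\sigma t$ strictly raises it, so the geometric series of the homological perturbation lemma converges in the formal power-series sense, transferring the contraction onto $\big(H_0(\delta),\bar d\big)$. Here $H_0(\delta)$ is the resolution-degree-$0$ homology, carrying a residual form-degree grading, and the transferred differential $\bar d$ is the descent of $d$, raising form degree by one (the higher $s_k$ drop out of the leading projection). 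It then remains to unwind $H^\bullet(\bar d)$: a $\bar d$-cocycle is a resolution-degree-$0$ class represented by $x$ with $\delta x=0$ (automatic) and $[dx]=0$ in $H(\delta)$, i.e.\ $dx=\delta y$, and a $\bar d$-coboundary is $x=dz+\delta z'$, which is exactly the equivalence class in \eqref{cohomologywithvalues:Def}. Hence $H^\bullet(s)\cong H^\bullet(d\,|\,H_0(\delta))$. In the concrete setting of this paper the same computation is carried out by hand in the three lemmas of Section~\ref{sec:cohomology}, which may be cited or re-run on the filtered complex in place of invoking the abstract perturbation lemma.
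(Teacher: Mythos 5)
The paper itself offers \emph{no} proof of this statement: it is imported verbatim as Theorem~8.3 of \cite{henneaux1992quantization} and used as a black box, the closest in-paper analogue being the hands-on computation of $H^\bullet(D)$ in Section~\ref{sec:cohomology} (Lemmas~\ref{cohomologylemma1}--\ref{cohomologylemma3}), which is the concrete instance of part~(b). Measured against the textbook proof, your part~(a) is essentially the standard argument: the order-by-order decomposition of $s^2$, the identification of the obstruction $\mathcal R_n$, and its closedness $[\delta,\mathcal R_n]=0$ extracted from $[s^{(n)},(s^{(n)})^2]=0$ are all as in Henneaux--Teitelboim, and you correctly isolate the one nontrivial step, namely that acyclicity of $\delta$ on elements of $\mathcal A$ must be upgraded to solvability of $[\delta,s_{n+1}]=-\mathcal R_n$ in the space of derivations. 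The textbook discharges this differently from your sketch: instead of constructing a contracting homotopy for $[\delta,-]$ on $\operatorname{Der}(\mathcal A)$, it uses that a derivation is fixed by its values on generators and defines $s_{n+1}$ generator by generator, inducting on the generator's resolution degree --- on a generator $z$ one solves $\delta(s_{n+1}z)=-\mathcal R_n(z)\pm s_{n+1}(\delta z)$ (signs as appropriate), whose right-hand side is already defined, is $\delta$-closed by the inductive hypothesis together with $[\delta,\mathcal R_n]=0$, and has positive resolution degree, so acyclicity of $\delta$ on $\mathcal A$ itself suffices. Your free-module route (local frames $\partial_a$, homotopy transferred by perturbation) would also work in this paper's setting, where charts are free over smooth functions on the generators $\dr z^a,\varepsilon^a$ and the homotopy $H$ of \eqref{homotopydef} is explicit, but it is an additional lemma you would have to prove, whereas the generator argument needs nothing beyond the stated hypotheses; as it stands this step of your proposal is a flagged sketch, not a gap in conception. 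For part~(b) you take a genuinely different route: the homological perturbation lemma in place of the direct expansion in resolution degree used both in the textbook and in the paper's three lemmas. Your convergence argument is right (each factor $\sigma t$ strictly raises $\operatorname{resdeg}$, so the geometric series is locally finite in the formal-power-series completion); the one point you should make explicit is why the corrections $p\,t(\sigma t)^k i$, $k\geq 1$, to the transferred differential vanish --- this holds because the contraction can be chosen with projection $p=\sfs_0^\star$ annihilating everything of positive resolution degree, so that $\bar d=p\,d\,i$ as you assert, and unwinding $H^\bullet(\bar d)$ then reproduces exactly the equivalence classes of \eqref{cohomologywithvalues:Def}. In sum: (a) coincides with the standard proof modulo the derivation-solvability lemma, for which the textbook's generator induction is the more economical choice; (b) is a cleaner abstract packaging whose price is the contraction bookkeeping that the paper's explicit Lemmas~\ref{cohomologylemma1}--\ref{cohomologylemma3} avoid by working term by term.
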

\noindent To satisfy the assumptions of the theorem for the purposes of the construction of subsection \ref{fexpconnsubsection}, we identify
\be
\deg_{\mathrm{HT}}=\operatorname{resdeg}+\operatorname{formdeg}\,.
\ee
Then we see that $\delta^2=0$ and the identities \eqref{eq:app:d:assumptions} are a consequence of the first three $D^2=0$ identities of \eqref{expandedflatness} which were established in subsection \ref{fexpconnsubsection} already. We then obtain the new differential as $s\equiv D$. This is of degree $+1$ in the total degree $\operatorname{formdeg}$ (this would be ghost number in the physics context).

\let\oldbibliography\thebibliography
\renewcommand{\thebibliography}[1]{%
    \oldbibliography{#1}%
    \setlength{\itemsep}{-1pt}%
}
\begin{multicols}{2}
{\setstretch{0}
    \small
    \bibliography{Alex_notes.bib}}

\providecommand{\href}[2]{#2}\begingroup\raggedright\begin{thebibliography}{10}

\bibitem{Cattaneo:2001vu}
A.~S. Cattaneo and G.~Felder, {\it {On the globalization of Kontsevich's star
  product and the perturbative Poisson sigma model}},  {\em Prog. Theor. Phys.
  Suppl.} {\bf 144} (2001) 38--53,
  [\href{http://arxiv.org/abs/hep-th/0111028}{{\tt hep-th/0111028}}].

\bibitem{Liao:2015lta}
H.-Y. Liao and M.~Sti\'enon, {\it {Formal exponential map for graded
  manifolds}},  {\em Int. Math. Res. Not.} {\bf 2019} (2019), no.~3 700--730,
  [\href{http://arxiv.org/abs/1508.02780}{{\tt arXiv:1508.02780}}].

\bibitem{pantev2013shifted}
T.~Pantev, B.~To{\"e}n, M.~Vaqui{\'e}, and G.~Vezzosi, {\it Shifted symplectic
  structures},  {\em Publications math{\'e}matiques de l'IH{\'E}S} {\bf 117}
  (2013) 271--328.

\bibitem{Moshayedi:2020bkb}
N.~Moshayedi, {\it {Formal Global AKSZ Gauge Observables and Generalized Wilson
  Surfaces}},  {\em Annales Henri Poincare} {\bf 21} (2020), no.~9 2951--2995,
  [\href{http://arxiv.org/abs/2004.03984}{{\tt arXiv:2004.03984}}].

\bibitem{Fisch:1989rp}
J.~M.~L. Fisch and M.~Henneaux, {\it {Homological Perturbation Theory and the
  Algebraic Structure of the Antifield - Antibracket Formalism for Gauge
  Theories}},  {\em Commun. Math. Phys.} {\bf 128} (1990) 627.

\bibitem{Fisch:1989dq}
J.~M.~L. Fisch, M.~Henneaux, J.~Stasheff, and C.~Teitelboim, {\it {Existence,
  Uniqueness and Cohomology of the Classical BRST Charge with Ghosts of
  Ghosts}},  {\em Commun. Math. Phys.} {\bf 120} (1989) 379--407.

\bibitem{Kontsevich:1997vb}
M.~Kontsevich, {\it {Deformation quantization of Poisson manifolds. 1.}},  {\em
  Lett. Math. Phys.} {\bf 66} (2003) 157--216,
  [\href{http://arxiv.org/abs/q-alg/9709040}{{\tt q-alg/9709040}}].

\bibitem{vysoky2021global}
J.~Vysok\'{y}, {\it Global theory of graded manifolds},
  \href{http://arxiv.org/abs/2105.02534}{{\tt arXiv:2105.02534}}.

\bibitem{mehta2015atiyah}
R.~A. Mehta, M.~Sti{\'e}non, and P.~Xu, {\it The atiyah class of a dg-vector
  bundle},  {\em Comptes Rendus Mathematique} {\bf 353} (2015), no.~4 357--362.

\bibitem{batakidis2018atiyah}
P.~Batakidis and Y.~Voglaire, {\it Atiyah classes and dg-lie algebroids for
  matched pairs},  {\em Journal of Geometry and Physics} {\bf 123} (2018)
  156--172.

\bibitem{yano1964linear}
K.~Yano and A.~Ledger, {\it Linear connections on tangent bundles},  {\em
  Journal of the London Mathematical Society} {\bf 1} (1964), no.~1 495--500.

\bibitem{dolgushev2005covariant}
V.~Dolgushev, {\it Covariant and equivariant formality theorems},  {\em
  Advances in Mathematics} {\bf 191} (2005), no.~1 147--177.

\bibitem{henneaux1992quantization}
M.~Henneaux and C.~Teitelboim, {\em Quantization of Gauge Systems}.
\newblock Princeton University Press, 1992.

\bibitem{quillen1985superconnections}
D.~Quillen, {\it Superconnections and the chern character},  {\em Topology}
  {\bf 24} (1985), no.~1 89--95.

\bibitem{Alexandrov:1995kv}
M.~Alexandrov, A.~Schwarz, O.~Zaboronsky, and M.~Kontsevich, {\it {The Geometry
  of the master equation and topological quantum field theory}},  {\em Int. J.
  Mod. Phys. A} {\bf 12} (1997) 1405--1429,
  [\href{http://arxiv.org/abs/hep-th/9502010}{{\tt hep-th/9502010}}].

\bibitem{Cattaneo:2018igh}
A.~S. Cattaneo, N.~Moshayedi, and K.~Wernli, {\it {On the Globalization of the
  Poisson Sigma Model in the BV-BFV Formalism}},  {\em Commun. Math. Phys.}
  {\bf 375} (2020), no.~1 41--103, [\href{http://arxiv.org/abs/1808.01832}{{\tt
  arXiv:1808.01832}}].

\bibitem{stolz2012equivariant}
S.~Stolz, {\it Equivariant de rham cohomology and gauged field theories},  {\em
  Lecture Notes
  \url{http://www.nd.edu/~stolz/Math80440(S2012)/Gauged_QFTs.pdf}} (2012).

\end{thebibliography}\endgroup
\end{multicols}
\end{document}